\newtheorem{Lemma}{Lemma}
\newtheorem{Corollary}{Corollary}
\newtheorem{lemma}[Lemma]{$\mathbf{Lemma}$}
\newtheorem{corollary}[Corollary]{$\mathbf{Corollary}$}
\newcounter{problem}
\newcounter{save@equation}
\newcounter{save@problem}
\newenvironment{problem}
{\setcounter{problem}{\value{save@problem}}%
  \setcounter{save@equation}{\value{equation}}%
  \let\c@equation\c@problem
  \subequations
}
{\endsubequations
  \setcounter{save@problem}{\value{equation}}%
  \setcounter{equation}{\value{save@equation}}%
}
\begin{document}
\title{  \vspace{-0.5em}\huge{  Analytical Optimization for Antenna Placement in Pinching-Antenna Systems }}

\author{ Zhiguo Ding, \IEEEmembership{Fellow, IEEE}, and H. Vincent Poor, \IEEEmembership{Life Fellow, IEEE}   \thanks{ 
  
\vspace{-1em}

Z. Ding is with the University
of Manchester, Manchester, M1 9BB, UK, and Khalifa University, Abu Dhabi, UAE.    
H. V. Poor is with the  Department of Electrical and Computer Engineering, Princeton University, Princeton, NJ 08544, USA.
 

  }\vspace{-2em}}
 \maketitle

\begin{abstract}
As the main issue in pinching-antenna system design, antenna location optimization is key to realizing channel reconfigurability and system flexibility. Most existing works in this area adopt sophisticated optimization and learning tools to identify the optimal antenna locations in a numerical manner, where insightful understandings of the pinching antenna placement are still missing. Motivated by this research gap, this paper aims to carry out analytical optimization for pinching antenna placement, where closed-form solutions for the optimal antenna locations are obtained to reveal the impact of antenna placement on the system performance. In particular, for the user-fairness-oriented orthogonal multiple access (OMA) based transmission, analytical results are obtained to reveal that the pinching antenna needs to be activated at the place that would be beneficial to all served users; however, the users' distances to the waveguide have no impact on the location selection. For the greedy-allocation-based OMA transmission, an asymptotic study based on a high signal-to-noise ratio approximation is carried out to show that the optimal antenna location is in close proximity to the user who is nearest to the waveguide.  For non-orthogonal multiple access (NOMA) based transmission, even with a user-fairness-oriented objective, the obtained analytical results show that the optimal antenna location is not the position that can benefit all users, but rather is near the user positioned closest to the waveguide. 
\end{abstract}\vspace{-0.2em}

\begin{IEEEkeywords}
Pinching antennas, analytical optimization, antenna location optimization, large-scale path loss. 
\end{IEEEkeywords}
\vspace{-0.5em}

\section{Introduction}
The detrimental effects of wireless channels constitute the main difficulty in designing reliable wireless systems \cite{Rappaport}. For example, in a conventional wireless system, the base station is deployed at a fixed location, whereas its serving users are randomly distributed. As a result, the distances between the base station and the users can be substantial, leading to significant large-scale path loss and an increased likelihood of line-of-sight (LoS) blockage. Various communication techniques have been developed in order to combat such impairments introduced by wireless channels. For example, multiple-input multiple-input (MIMO) techniques focus on small-scale multi-path fading, and introduce a high number of spatial degrees of freedom, enabling more flexible and efficient signal transmission  \cite{6375940}. Recently developed flexible-antenna techniques, including reconfigurable intelligent surfaces (RISs), fluid-antenna systems, and movable-antenna networks, are able to reconfigure wireless channels,  enhance signal propagation, reduce interference, and improve system capacity \cite{irs1,9326394,9264694,10243545}. However, these existing techniques are similar in the sense that they accept the loss of LoS links, and in the presence of LoS blockage, their focus is mainly to build favorable small-scale fading

Pinching antennas represent a paradigm shift in the design of wireless communication systems, as their focus is to build strong LoS links between transceivers \cite{pinching_antenna2}. As shown by NTT DOCOMO's testbed, a pinching antenna is similar to a leaky-wave antenna and can be created by applying a separate dielectric, e.g., a cloth pinch, to a dielectric waveguide \cite{pinching_antenna1}. Because radio-frequency signals inside the waveguide can be emitted at an arbitrary point on the waveguide, the use of pinching antennas can create new LoS links between the transceivers or make the existing ones stronger. Various works have been carried out recently to characterize the fundamental limits and illustrate the important applications of pinching antennas. For example, \cite{mypa} identified the performance gains of pinching antennas over conventional fixed-location antennas in scenarios with different numbers of waveguides, antennas, and users. The design of pinching-antenna assisted uplink transmission was considered in \cite{Tegosuppa}, and an outage probability based performance analysis was carried out for pinching-antenna systems in \cite{georgex33}. Practical implementation issues of pinching-antenna systems, including antenna activation and channel estimation, have been investigated in \cite{Kaidiactivation, guispawc}. The potential to use pinching beamforming to realize analog and digital beamforming has been illustrated in \cite{Chongjunpa1x}, and the impact of pinching beamforming on the design of multiple access techniques has been characterized in \cite{yuanwei1xzd}. The impact of LoS blockage on the design of pinching-antenna systems has been analyzed in \cite{myblockage}, where it is shown that LoS blockage brings opportunities to suppress co-channel interference in multi-user pinching-antenna systems.  The applications of pinching antennas to improve secrecy communications and facilitate federated learning have been investigated in \cite{yuanweisecu,250410656}, and  \cite{xiamingco}.  The use of the flexibility and reconfigurability of the pinching-antenna array for improving the sensing resolution and the communication throughput simultaneously has been exploited in \cite{yarupass,yuanwei53x,robertsac}. Furthermore, the capability of pinching antennas to create strong LoS links has also been used to support wireless power transfer as shown in \cite{250606754,250512403}

    \begin{figure}[t]\centering \vspace{-0em}
    \epsfig{file=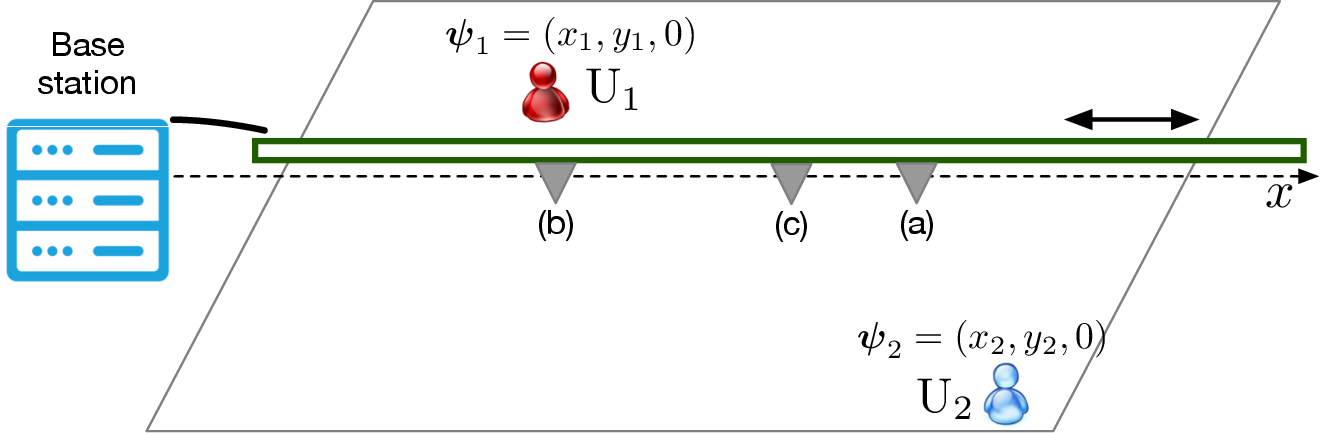, width=0.45\textwidth, clip=}\vspace{-0.5em}
\caption{{An illustration for antenna placement/activation in pinching-antenna systems, with ${\rm U}_m$'s coordinate is denoted by $(x_m,y_m,0)$. Position (a) ensures that the distances from the two users to the antenna are identical, poisition (b) is close to ${\rm U}_1$ which is near the waveguide, and position (c) is midway between the two users on the waveguide, i.e., $\frac{x_1+x_2}{2}$.     }
  \vspace{-1em}    }\label{fig1}   \vspace{-1.2em} 
\end{figure}

Antenna location optimization is the main issue in pinching-antenna system design, as it is key to offering superior channel reconfigurability and system flexibility. For the aforementioned existing works, various optimization or learning based tools have been developed to identify the optimal antenna locations in a numerical manner, where insightful understandings of the pinching antenna placement are still missing. We note that antenna placement was studied for the single-user case in \cite{yanqingpa}, but how the obtained result can be extended to the multi-user scenario is still unknown. Take the two-user case illustrated in Fig. \ref{fig1} as an example. There is still no answer to the simple question of where the pinching antenna should be placed, at position (a) which ensures that all users have the same distance to the antenna, position (b) which is close to a user near the waveguide, or position (c) which is midway between the users on the waveguide?  To bridge this research gap, this paper aims to develop insightful analytical results for the optimal antenna locations in order to reveal the impact of antenna placement on the system performance. The contributions of this paper are listed as follows:
\begin{itemize}
\item For user-fairness-oriented orthogonal multiple access (OMA) based transmission, a resource allocation optimization problem is first formulated, and closed-form analytical expressions are obtained for the optimal power allocation and antenna location. In particular, the obtained closed-form optimal solution reveals that the pinching antenna needs to be activated at a place that would be beneficial to all served users. However, a surprising result is that the users' distances to the waveguide have no impact on the location selection. Take the case shown in Fig. \ref{fig1} as an example, where position (c), instead of position (a), would be used. 

\item For the greedy-allocation-based OMA transmission, the corresponding resource allocation problem is shown to be much more challenging than that of the user-fairness-oriented ones, due to the strong coupling effect between the optimization variables. An asymptotic study based on a high signal-to-noise ratio (SNR) approximation is carried out, where insightful analytical results are obtained to show that the distance between a user and the waveguide is critical to the antenna placement, which is different from the conclusion made in the case with the user-fairness-oriented objectives. In particular, the optimal antenna location needs to be close to the user who is near the waveguide, e.g., position (b) in Fig. \ref{fig1}.  

\item Antenna placement optimization for non-orthogonal multiple access (NOMA) based transmission is formulated and shown to be much more difficult than the OMA ones. The reason is that the design of successive interference cancellation (SIC), the key component of NOMA systems, is affected by the antenna location. Take power-domain NOMA as an example, where a user with strong channel conditions carries out SIC \cite{NOMAPIMRC,Nomading}. However, a change in the location of the pinching antenna can cause a user to degrade from a strong user to a weak user. A low-complexity optimal solution is obtained by using the feature of SIC, and the obtained analytical solution reveals that even with a user-fairness-oriented objective, the optimal location of the pinching antenna is not the position that can benefit all users, e.g., points (a) or (c), but close to the user that is near the waveguide, e.g., position (b) in Fig. \ref{fig1}. 
\end{itemize}

\section{User-Fairness-Oriented OMA Transmission}\label{section 2}
Consider a multi-user downlink communication scenario, where a base station is equipped with a pinching-antenna system and serves $M$ single-antenna users, denoted by ${\rm U}_m$. The OMA-based transmission strategy is focused on in this section, where time division multiple access (TDMA) is used as an example of OMA, i.e., the $M$ users are served separately in $M$ time slots. 

In this paper, we assume that the base station is equipped with a single waveguide, on which a single pinching antenna is activated. An optimal design of this pinching-antenna system is to dynamically change the location of the pinching antenna from one time slot to another, which can lead to high system complexity. A low-complexity approach is to fix the location of the pinching antenna over a moderately prolonged duration, e.g., $M$ time slots, which is assumed throughout this paper. We note that the analytical solutions obtained in this paper are also applicable to orthogonal frequency-division multiple access (OFDMA) systems, where multiple users are served simultaneously with a fixed-location pinching antenna. 

Denote the location of the activated pinching antenna by ${\boldsymbol \psi}_{\rm OMA}$, where  ${\rm U}_m$'s location   is denoted by ${\boldsymbol \psi}_m=(x_m,y_m,0)$. Similar to \cite{mypa}, the users are assumed to be uniformly deployed in a rectangular service area with its two sides being denoted by $D_{\rm W}$ and $D_{\rm L}$, respectively. This means that the location of a user is confined as follows: $-\frac{D_{\rm L}}{2}\leq x_m\leq \frac{D_{\rm L}}{2}$ and $-\frac{D_{\rm W}}{2}\leq y_m\leq \frac{D_{\rm W}}{2}$. The waveguide is placed at the center of the service area with its height denoted by $d$, which means that the coordinate of the pinching antenna is given by ${\boldsymbol \psi}_{\rm OMA}=(x,0,d)$ and $-\frac{D_{\rm L}}{2}\leq x\leq \frac{D_{\rm L}}{2}$. Therefore, 
${\rm U}_m$'s achievable data rate in OMA is given by \cite{mypa}
  \begin{align}\label{models}
  R_m^{\rm OMA} =& \frac{1}{M}\log\left(
  1+\frac{ \eta P_m}{\sigma^2  \left| {\boldsymbol \psi} _m - {\boldsymbol \psi}_{\rm OMA}\right|^2}
  \right)  ,
  \end{align}     
where  $\eta = \frac{c^2}{16\pi^2 f_c^2 }$, $c$ denotes the speed of light, the carrier frequency is denoted by $f_c$, $\left| {\boldsymbol \psi} _m - {\boldsymbol \psi}_{\rm OMA}\right|$ denotes the distance between the pinching antenna and ${\rm U}_m$, $P_m$ denotes the transmit power for  ${\rm U}_m$'s signal, and $\sigma^2$ denotes the noise power. The reason for having the factor $\frac{1}{M}$ is due to the use of TDMA. While the system model is presented for the downlink scenario, the obtained results can be applied to the uplink scenario in a straightforward manner. 
  \vspace{-1.5em}
   \subsection{Maximizing the Worst User's Data Rate}

One user-fairness-oriented resource allocation problem can be formulated as follows: 
    \begin{problem}\label{pb:1} 
  \begin{alignat}{2}
\underset{P_{m}\geq0 , x  }{\rm{max}}  &\quad   \min\left\{R_1^{\rm OMA}, \cdots, R_M^{\rm OMA}\right\} \label{1tst:1}
\\ s.t. &\quad   \sum^{M}_{m=1}  P_{m}\leq P  \label{1tst:2}  
\\  &\quad     -\frac{D_{\rm L}}{2}\leq x \leq \frac{D_{\rm L}}{2} \label{1tst:3}  .
  \end{alignat}
\end{problem}  
Unlike resource allocation for conventional systems with fixed-location antennas, problem \eqref{pb:1} involves not only power allocation but also the antenna location. 
The following lemma provides a precise estimation of the optimal antenna location as well as a closed-form expression of optimal power allocation. 

\begin{lemma}\label{lemma1}
For the user-fairness-oriented resource allocation problem considered in \eqref{pb:1}, the optimal solution of the antenna location is given by
\begin{align}
  x^* =\frac{1}{M}  \sum^{M}_{m=1}x_m ,
\end{align}
and the optimal transmit power solutions are given  by
\begin{align}
P_m^* = \frac{ \tau_{mx^*}   }{  \sum^{M}_{i=1}   \tau_{ix^*} } P,
\end{align}
where $\tau_{mx^*}= (x^*-x_m)^2+y_m^2+d^2$, $1\leq m \leq M$.
\end{lemma}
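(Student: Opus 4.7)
The plan is to exploit the standard max-min structure: at any optimum the worst user's rate is tight for every user, which collapses the optimization into a one-dimensional problem over $x$.

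First I would argue that at the optimum the $M$ rates must all be equal. Suppose, for contradiction, that at some feasible solution $(\{P_m\}, x)$ a strict inequality $R_j^{\rm OMA} > R_k^{\rm OMA} = \min_\ell R_\ell^{\rm OMA}$ holds for some $j, k$. Since $R_m^{\rm OMA}$ is strictly increasing and continuous in $P_m$, one can transfer a small amount of power $\epsilon>0$ from $P_j$ to $P_k$; this keeps the sum-power constraint \eqref{1tst:2} and the box constraint \eqref{1tst:3} intact, while strictly raising the minimum rate for $\epsilon$ small enough, contradicting optimality. Hence all $R_m^{\rm OMA}$ coincide at the optimum, which, because $\log(1+\cdot)$ is strictly monotonic, is equivalent to $\frac{P_m}{\tau_{mx}}$ being the same constant $c$ for every $m$.

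Next I would determine $c$ and the power allocation. From $P_m = c\,\tau_{mx}$ and the fact that the sum-power constraint must be active at the optimum (otherwise one could scale every $P_m$ up by the same factor, strictly increasing all $R_m^{\rm OMA}$), I get $c\sum_{i=1}^{M}\tau_{ix} = P$, which immediately yields the claimed formula $P_m^* = \tau_{mx^*}P/\sum_{i=1}^{M}\tau_{ix^*}$. Substituting back, the common rate becomes
\begin{align}
R^{\rm OMA} = \frac{1}{M}\log\left(1 + \frac{\eta P}{\sigma^2 \sum_{i=1}^{M}\tau_{ix}}\right),
\end{align}
so the remaining task is to choose $x\in[-\tfrac{D_{\rm L}}{2},\tfrac{D_{\rm L}}{2}]$ to minimize $f(x) = \sum_{i=1}^{M}\tau_{ix} = \sum_{i=1}^{M}\bigl[(x-x_i)^2 + y_i^2 + d^2\bigr]$.

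Since $f(x)$ is a sum of squared deviations plus a constant (in $x$), it is strictly convex with a unique unconstrained minimizer obtained by setting $f'(x) = 2\sum_{i=1}^{M}(x-x_i) = 0$, giving $x^* = \tfrac{1}{M}\sum_{i=1}^{M}x_i$. Finally I would verify feasibility: because each $x_i\in[-\tfrac{D_{\rm L}}{2},\tfrac{D_{\rm L}}{2}]$, their arithmetic mean lies in the same interval, so the box constraint \eqref{1tst:3} is automatically satisfied and $x^*$ is also the constrained minimizer. The main (minor) obstacle in this proof is the rate-equalization argument of the first paragraph; once that is nailed down, the rest is direct and the antenna position decouples cleanly from the transverse distances $y_m$ and the height $d$, which appear only as additive constants in $f(x)$ and thus do not influence $x^*$.
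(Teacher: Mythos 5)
Your proof is correct and follows the same two-stage decomposition as the paper's (solve the inner power allocation for fixed $x$, obtain the reduced objective $\sum_{i}\tau_{ix}$, then optimize over $x$), but it replaces the paper's machinery with more elementary arguments at both stages. The paper reformulates the max-min via an epigraph variable $t$, writes the Lagrangian of the inner problem, and solves the KKT system to obtain $P_m=\tau_{mx}P/\sum_i\tau_{ix}$ and the common rate $\tfrac{1}{M}\log\bigl(1+\eta P/(\sigma^2\sum_i\tau_{ix})\bigr)$; you get the same formulas from a rate-equalization exchange argument plus activity of the sum-power constraint. For the outer problem the paper again runs KKT on $\min_x\sum_i\tau_{ix}$ and rules out nonzero boundary multipliers, whereas you simply note strict convexity and that the mean of points in $[-\tfrac{D_{\rm L}}{2},\tfrac{D_{\rm L}}{2}]$ stays in that interval, so the box constraint is inactive --- a cleaner way to dispose of \eqref{1tst:3}. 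One small imprecision to tighten: in the equalization step, transferring power from a single $j$ to a single minimizer $k$ does not strictly raise the minimum when several users attain it; you should instead let $S$ be the set of indices achieving the minimum, observe $S\neq\{1,\dots,M\}$ by assumption, and redistribute a small $\epsilon$ from some $j\notin S$ (which has $P_j>0$ since $R_j^{\rm OMA}>0$) among all $k\in S$. With that routine patch the argument is complete.
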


\begin{proof}
See Appendix \ref{proof1}.

   \end{proof}

   {\it Remark 1:}    Lemma \ref{lemma1} provides precise guidance on the optimal location of the pinching antenna for serving the $M$ user, as well as the optimal power allocation. Recall that if the pinching antenna can be dynamically activated in the $M$ time slots, the optimal antenna placement is to activate the pinching antenna at the position closest to the serving user. However, if the $M$ users are served simultaneously, Lemma \ref{lemma1} shows that the antenna should be activated at a position that is beneficial to all users.
   
   {\it Remark 2:} A surprising observation from Lemma \ref{lemma1} is that the distances of the users to the waveguide, i.e., $y_m$, have no impact on the selection of the optimal antenna location. This is countintuitive, as explained in the following.  Because problem \eqref{pb:1} essentially ensures that the users have the same data rate, intuitively, the optimal antenna location should be the position ensuring that all users have the same distance to the antenna, i.e.,  $\left| {\boldsymbol \psi} _i - {\boldsymbol \psi}_{\rm OMA}\right|=\left| {\boldsymbol \psi} _j - {\boldsymbol \psi}_{\rm OMA}\right|$, for $1\leq i,j\leq M$. Taking the two-user case shown in Fig. \ref{fig1} as an example, position (a) should be optimal. Lemma \ref{lemma1} shows that this intuition is mistaken. Instead, position (c) in Fig. \ref{fig1} should be used. 
   
     \vspace{-1.5em}
 \subsection{Minimizing Total Transmit Power Consumption}
Another user-fairness-oriented resource allocation problem is given by 
    \begin{problem}\label{pb:6} 
  \begin{alignat}{2}
\underset{   x  }{\rm{min}} &\quad   \sum^{M}_{m=1}  P_{m}\label{6tst:1}
\\ s.t. &\quad    R_m^{\rm OMA}\geq R, \quad 1\leq m \leq M.  \label{6tst:2}  
\\  &\quad   P_m\geq 0, \quad -\frac{D_{\rm L}}{2}\leq x \leq \frac{D_{\rm L}}{2} \label{6tst:3}  ,
  \end{alignat}
\end{problem}  
where $R$ denotes the target data rate. 
 
With some straightforward algebraric manipulation, problem \eqref{pb:6} can be recast as follows: 
   \begin{problem}\label{pb:7} 
  \begin{alignat}{2}
\underset{P_{m} , x  }{\rm{min}} &\quad   \sum^{M}_{m=1}  P_{m}\label{7tst:1}
\\ s.t. &\quad    
 P_m
 \geq \epsilon (x-x_m)^2+\tau_m, \quad 1\leq m \leq M.  \label{7tst:2}  
\\  &\quad   \eqref{6tst:3}. 
  \end{alignat}
\end{problem}  
where $\epsilon =\frac{\sigma^2}{\eta}\left(e^{MR}-1\right)$ and $\tau_m=\epsilon \left(y_m^2+d^2\right)$.

Problem \eqref{pb:7} is a linear programming problem, and its closed-form optimal solution can be obtained as follows. In order to reduce the number of Lagrange multipliers, the constraints of $P_m\geq 0$ are omitted, and the Lagrangian of  problem \eqref{pb:7} is given by 
\begin{align}
L = & \sum^{M}_{m=1}  P_{m}+\sum^{M}_{m=1}\lambda_m\left(\epsilon (x-x_m)^2-P_m+\tau_m\right)\\\nonumber &+ \lambda_{M+1}\left(x - \frac{D_{\rm L}}{2}  \right)+ \lambda_{M+2}\left(- \frac{D_{\rm L}}{2} -x \right),
\end{align}
where $\lambda_{m}$ denotes the Lagrange multiplier.
The corresponding Karush-Kuhn-Tucker (KKT) conditions are given by \cite{Boyd}\footnote{Due to space limitations, the primal feasibility conditions are omitted here, but it is straightforward to verify that the obtained solution satisfies all the primal feasibility conditions.  }
\begin{align}\label{kkt2}
\left\{\begin{array}{l}  
1- \lambda_m=0, 1\leq m \leq M\\
2 \sum^{M}_{m=1}\lambda_m\epsilon (x-x_m)+\left(\lambda_{M+1} -\lambda_{M+1}\right)=0 \\
\lambda_m\left(\epsilon (x-x_m)^2-P_m+\tau_m\right) =0 , 1\leq m \leq M\\
\lambda_{M+1}\left(x - \frac{D_{\rm L}}{2}  \right)=0\\ \lambda_{M+2}\left(- \frac{D_{\rm L}}{2} -x \right)=0
 \end{array}\right..
 \end{align}
 Following the steps in the proof for Lemma \ref{lemma1}, it can be established that $\lambda_{M+1}=\lambda_{M+2}=0$. Furthermore, the use of \eqref{kkt2} leads to the conclusion that   $\lambda_m=1$, $1\leq m\leq M$, which yields the following corollary. 
 \begin{corollary}\label{corollary1}
For problem \eqref{pb:6}, the optimal antenna location is given by 
\begin{align}
  x^* =\frac{1}{M}  \sum^{M}_{m=1}x_m ,
\end{align}
and the optimal power control solutions are given by 
\begin{align}
P_m^* =  \epsilon (x^*-x_m)^2 +\tau_m.
\end{align}
\end{corollary}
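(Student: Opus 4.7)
The plan is to exploit the special structure of problem \eqref{pb:7}, where each power variable $P_m$ appears in only one rate constraint and in the objective with unit coefficient, while the antenna coordinate $x$ couples all the constraints. First I would observe that problem \eqref{pb:7} is a convex program with linear objective and convex quadratic constraints, and that the feasible set is nonempty (one can always place the antenna anywhere in the interval and then pick each $P_m$ large enough), so an optimum exists and the KKT conditions are necessary and sufficient.

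The key reduction is that every rate constraint \eqref{7tst:2} must hold with equality at any optimum. Indeed, if $P_m > \epsilon(x-x_m)^2+\tau_m$ for some $m$, one can decrease $P_m$ by a small amount without violating any other constraint, which strictly lowers the objective; this contradicts optimality. Hence $P_m^* = \epsilon(x^*-x_m)^2+\tau_m$, which is exactly the power-allocation claim in the corollary.

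Substituting these expressions back collapses the problem to a one-dimensional minimization:
\begin{align}
\min_{-\tfrac{D_{\rm L}}{2}\le x\le \tfrac{D_{\rm L}}{2}} \quad \epsilon\sum_{m=1}^M (x-x_m)^2 + \sum_{m=1}^M \tau_m.
\end{align}
The objective is a strictly convex quadratic in $x$, so setting its derivative to zero gives $2\epsilon\sum_{m=1}^M(x-x_m)=0$, i.e., $x^* = \frac{1}{M}\sum_{m=1}^M x_m$. Since each $x_m\in[-\tfrac{D_{\rm L}}{2},\tfrac{D_{\rm L}}{2}]$ and the interval is convex, this arithmetic mean lies in the feasible interval, so the box constraint is inactive and $x^*$ is the unconstrained minimizer.

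The main obstacle is purely conceptual rather than technical: once the $P_m$ variables are recognized as decoupled across the constraints, the problem reduces to an elementary one-variable convex quadratic. This matches the KKT route sketched around \eqref{kkt2} in the text, where the stationarity condition for $P_m$ gives $\lambda_m=1$, the boundary argument from the proof of Lemma \ref{lemma1} forces $\lambda_{M+1}=\lambda_{M+2}=0$, and the stationarity condition for $x$ then reduces to $\sum_m(x^*-x_m)=0$, recovering the same closed-form expression.
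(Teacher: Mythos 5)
Your proof is correct. It reaches the same two conclusions by a slightly more elementary route than the paper: where the paper forms the Lagrangian of problem \eqref{pb:7}, reads $\lambda_m=1$ off the stationarity condition in $P_m$, and invokes complementary slackness to make the rate constraints tight, you obtain tightness directly by a perturbation argument (decrease any slack $P_m$ to lower the objective), which is cleaner and needs no multipliers. Likewise, where the paper disposes of the box constraint by the sign argument on $\lambda_{M+1},\lambda_{M+2}$ inherited from the proof of Lemma \ref{lemma1}, you simply note that the arithmetic mean of points in $[-\tfrac{D_{\rm L}}{2},\tfrac{D_{\rm L}}{2}]$ lies in that interval, so the unconstrained minimizer of the reduced one-dimensional quadratic is automatically feasible; this also quietly avoids the degenerate edge case (all $x_m$ on the boundary) that the paper's strict-inequality argument glosses over. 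The substance is identical --- both proofs hinge on constraint tightness followed by the first-order condition $\sum_{m}(x^*-x_m)=0$ --- but your elimination-then-minimize decomposition makes the structure of the problem (powers decoupled given $x$) more transparent, at the cost of not exhibiting the dual certificate that the paper's KKT route provides.
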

 It is interesting to point out that the optimal antenna location solutions to problems \eqref{pb:1} and \eqref{pb:6} are identical.  

The closed-form optimal solution of problem \eqref{pb:6} facilitates insightful performance analysis. For example, an analytical comparison between the transmit powers required by pinching and conventional antennas is carried out in the following. In particular,  by using the closed-form expressions of $x^*$ and $P_m^*$, the overall transmit power required by pinching antennas is given by
\begin{align}
P^{\rm pin} = \sum^{M}_{m=1}\left[ \epsilon \left(\frac{1}{M}  \sum^{M}_{i=1}x_i -x_m\right)^2 +\tau_m\right].
\end{align}
whereas the total tranmit power required by conventional antennas is given by
\begin{align}
P^{\rm conv} = \sum^{M}_{m=1}\left[ \epsilon x_m^2 +\tau_m\right].
\end{align}
Therefore, the difference between the transmit powers, denoted by $\Delta^P = P^{\rm conv}  - P^{\rm pin} $, is given by
\begin{align}
\Delta^P 
=&\epsilon \sum^{M}_{m=1}\left[  x_m^2 -\left(\frac{1}{M}  \sum^{M}_{i=1}x_i -x_m\right)^2 \right] .
\end{align} 
With some straightforward algebraic manipulations, $\Delta^P $  can be simplified as follows: 
\begin{align}\label{power gain} 
\Delta^P  
=&\epsilon \sum^{M}_{m=1}\left[  x_m +\left(\frac{1}{M}  \sum^{M}_{i=1}x_i -x_m\right) \right]\\\nonumber &\times \left[  x_m -\left(\frac{1}{M}  \sum^{M}_{i=1}x_i -x_m\right) \right]
\\\nonumber
=&\epsilon\left[  \frac{1}{M}  \sum^{M}_{i=1}x_i  \right]\left[ 2  \sum^{M}_{i=1}x_i -  \sum^{M}_{i=1}x_i   \right]
=\frac{1}{M}  \epsilon\left[ \sum^{M}_{i=1}x_i  \right] ^2,
\end{align}
which is strictly positive and hence leads to the following corollary.
\begin{corollary}
The use of pinching antennas can strictly reduce the overall transmit power, compared to the case with conventional fixed-location antennas. 
\end{corollary}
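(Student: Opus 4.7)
The plan is to derive the corollary as a direct consequence of Corollary~\ref{corollary1}, by forming the explicit difference between the minimum transmit power achievable with a pinching antenna and the power required when the antenna is fixed at the reference location $x=0$ (i.e.\ the case captured by $P^{\rm conv}$). Since Corollary~\ref{corollary1} already provides closed-form expressions for the optimal antenna location $x^{*}=\frac{1}{M}\sum_{m=1}^{M}x_m$ and the per-user optimal powers $P_m^{*}=\epsilon(x^{*}-x_m)^2+\tau_m$, the conservative strategy is to simply substitute these into $P^{\rm pin}=\sum_{m=1}^M P_m^{*}$ and to compare the resulting expression term by term with $P^{\rm conv}=\sum_{m=1}^M(\epsilon x_m^2+\tau_m)$.

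Next, I would note that the constants $\tau_m$, which capture the contribution of $y_m^2+d^2$, appear identically in both $P^{\rm pin}$ and $P^{\rm conv}$ and therefore cancel in the difference $\Delta^P=P^{\rm conv}-P^{\rm pin}$. This reduces the comparison to the purely one-dimensional quantity $\epsilon\sum_{m=1}^M\bigl[x_m^2-(x^{*}-x_m)^2\bigr]$. I would then apply the identity $a^2-b^2=(a+b)(a-b)$ with $a=x_m$ and $b=x^{*}-x_m$, so that each summand factors as $\bigl(x_m+(x^{*}-x_m)\bigr)\bigl(x_m-(x^{*}-x_m)\bigr)=x^{*}(2x_m-x^{*})$. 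Summing over $m$ and using $\sum_m x_m=Mx^{*}$ collapses the expression into $\epsilon\, x^{*}\cdot(2Mx^{*}-Mx^{*})=M\epsilon(x^{*})^2$, which can be rewritten as $\frac{1}{M}\epsilon\bigl[\sum_{i=1}^M x_i\bigr]^2$, matching \eqref{power gain}.

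Finally, positivity follows immediately because $\epsilon>0$ (as $R>0$ implies $e^{MR}-1>0$) and the square $\bigl[\sum_{i=1}^M x_i\bigr]^2$ is nonnegative, being strictly positive whenever the user cluster is not exactly centered on the origin of the waveguide — the generic configuration assumed throughout the paper. This yields $\Delta^P>0$ and hence $P^{\rm pin}<P^{\rm conv}$, establishing the claim. I do not foresee any substantive obstacle here: the entire argument is algebraic bookkeeping built on Corollary~\ref{corollary1}, and the only subtlety worth flagging is the measure-zero degenerate case $\sum_m x_m=0$, in which $\Delta^P=0$; in that case the pinching antenna and the conventional antenna happen to coincide at $x=0$, so no strict improvement is possible but no loss is incurred either.
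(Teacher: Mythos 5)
Your proposal is correct and follows essentially the same route as the paper: substitute the closed-form solution from Corollary~\ref{corollary1}, cancel the $\tau_m$ terms, factor via $a^2-b^2=(a+b)(a-b)$, and collapse the sum to $\Delta^P=\frac{1}{M}\epsilon\bigl[\sum_{i=1}^{M}x_i\bigr]^2$. Your remark that positivity is only strict when $\sum_m x_m\neq 0$ is in fact more careful than the paper, which asserts strict positivity without excluding the measure-zero case where the users' $x$-coordinates sum to zero and the two antenna placements coincide.
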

{\it Remark 3:} We note that \eqref{power gain} also leads to an important insight into the impact of user clustering on the system performance. In particular, \eqref{power gain} shows that the performance gain of pinching antennas over conventional antennas is related to $\left[ \sum^{M}_{i=1}x_i  \right] ^2$. Therefore, the users whose locations are close to each other, e.g., they are distributed on the same side of the service area, should be grouped together, which enhances the performance gain of pinching antennas.

\subsection{Outage Performance Analysis}
Recall that both problems \eqref{pb:1} and \eqref{pb:6} choose the same antenna location, $ x^* =\frac{1}{M}  \sum^{M}_{m=1}x_m$, which ensures that no user is too far away from the pinching antenna. However, even with this optimal location, the users' distances to the pinching antenna are still diverse, which motivates the following outage performance analysis based on the optimal choice $x^*$. 

In particular, based on the use of the optimal power control and antenna location shown in Corollary \ref{corollary1}, we define the following power outage probability:
\begin{align}\nonumber
\mathbb{P}^{\rm o}_m \triangleq& \mathbb{P}\left(
P_m^* \geq \bar{P}
\right)
\\ \label{outageppr}
=& \mathbb{P}\left(
\epsilon (x^*-x_m)^2 +\tau_m \geq \bar{P},
\right)
\end{align}
where $\bar{P}$ denotes the transmit power budget per user. For the considered scenario, the power outage probability is identical to the conventional data rate outage probability. In particular, if $x^*$ is used as the antenna location and $\bar{P}$ is used as the transmit power, ${\rm U}_m$'s data rate is given by $\frac{1}{M}\log\left(
  1+\frac{ \eta \bar{P}}{\sigma^2  \left| {\boldsymbol \psi} _m^* - {\boldsymbol \psi}_{\rm OMA}\right|^2}
  \right) $, and it is straightforward to show that the data rate outage probability, $\mathbb{P}\left(  \frac{1}{M}\log\left(
  1+\frac{ \eta \bar{P}}{\sigma^2  \left| {\boldsymbol \psi} _m^* - {\boldsymbol \psi}_{\rm OMA}\right|^2}
  \right) \leq R
\right)$, is identical to \eqref{outageppr}.

With some straightforward algebraic manipulations,  the power outage probability can be expressed as follows:
\begin{align}
\mathbb{P}^{\rm o}_m =&    \mathbb{P}\left(
 \frac{\left(\sum^{M}_{i=1,i\neq m}x_i-(M-1)x_m\right)^2}{M^2}  +y_m^2 \geq \frac{\bar{P}}{\epsilon}-d^2
\right),
\end{align}
if $\frac{\bar{P}}{\epsilon}>d^2$. Otherwise $\mathbb{P}^{\rm o}_m=1$. Due to the large number of random variables, it is challenging to evaluate the power outage probability in a general multi-user case.  However, for the two-user special case, a closed-form expression for the power outage probability can be obtained as shown in the following lemma. 

\begin{lemma}\label{lemma2}
Assume that the two users are independently and uniformly distributed in the considered service area. The power outage probability achieved by the pinching-antenna system is given by  
\begin{align}\label{eqlemma2}
\mathbb{P}^{\rm o}_m =&    
    \frac{2}{D_{\rm W}}\left(\frac{D_{\rm W}}{2}- \min\left\{\frac{D_{\rm W}}{2}, \sqrt{\frac{\theta_1 D_{\rm L}^2}{4}}\right\}\right)\\\nonumber &+  \frac{2}{D_{\rm W}} \left(g(\theta_2) - g(\theta_3)\right),
\end{align}
if $\frac{\bar{P}}{\epsilon}>d^2$, otherwise $\mathbb{P}^{\rm o}_m=1$, 
where $g(y)$ is defined as follows:
\begin{align}
g(y)=&y+\theta_1y-\frac{4}{3D_{\rm L}^2}y^3  
 \\\nonumber 
&   -\frac{4}{D_{\rm L}}
   \left(
   \frac{y}{2} \sqrt{\theta_4 - y^2} + \frac{\theta_4}{2} \sin^{-1}\left(\frac{y}{\sqrt{\theta_4}}\right)
   \right),
\end{align}
if $\theta_2\geq \theta_3$, otherwise $g(y)=0$,  $\theta_1 =  \frac{4}{D_{\rm L}^2}\left(\frac{\bar{P}}{\epsilon}-d^2\right)$, 
$\theta_2=\min\left\{\frac{D_{\rm W}}{2},\sqrt{\frac{D_{\rm L}^2}{4}\theta_1}\right\}$ and $\theta_3=\sqrt{ \max\left\{0, \frac{D_{\rm L}^2}{4}\left(\theta_1-1\right)\right\}}$, and $\theta_4=\frac{D_{\rm L}^2}{4}\theta_1$. 
\end{lemma}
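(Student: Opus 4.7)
The plan is to substitute $M=2$ into the power outage expression, reduce it to a two-dimensional event in two independent random variables, and then evaluate the resulting probability by conditioning on one of them. For $M=2$ and either $m$, the condition inside the probability simplifies to
\begin{align*}
\tfrac{(x_2-x_1)^2}{4} + y_m^2 \geq \tfrac{\bar{P}}{\epsilon} - d^2,
\end{align*}
whose right-hand side equals $\theta_4 = D_{\rm L}^2\theta_1/4$. Define $u = (x_2-x_1)/2$ and $v = y_m$; these are independent, and by the i.i.d.\ assumption $\mathbb{P}^{\rm o}_1 = \mathbb{P}^{\rm o}_2$, so only one case needs computing. The variable $u$ is the scaled difference of two i.i.d.\ uniforms on $[-D_{\rm L}/2,D_{\rm L}/2]$ and hence carries the triangular density
\begin{align*}
f_u(u) = \tfrac{2}{D_{\rm L}}\bigl(1 - \tfrac{2|u|}{D_{\rm L}}\bigr), \qquad |u| \leq D_{\rm L}/2,
\end{align*}
while $v$ is uniform on $[-D_{\rm W}/2,D_{\rm W}/2]$. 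The trivial regime $\bar{P}/\epsilon \leq d^2$ makes the event deterministic, giving $\mathbb{P}^{\rm o}_m = 1$.

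Next I would partition the event $\{u^2 + v^2 \geq \theta_4\}$ into two subregions based on $v$. Subregion A is $|v| \geq \sqrt{\theta_4}$, in which the event holds regardless of $u$; its probability (respecting the cap $|v| \leq D_{\rm W}/2$) is exactly $\frac{2}{D_{\rm W}}\bigl(\frac{D_{\rm W}}{2} - \min\{D_{\rm W}/2,\sqrt{\theta_4}\}\bigr)$, which is the first term of \eqref{eqlemma2}. Subregion B is $|v| < \sqrt{\theta_4}$, where the event requires $|u| \geq \sqrt{\theta_4 - v^2}$; this is only feasible when $\sqrt{\theta_4 - v^2} \leq D_{\rm L}/2$, i.e., $|v| \geq \theta_3$. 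The effective integration range is therefore $|v| \in [\theta_3,\theta_2]$, which is non-empty exactly when $\theta_2 \geq \theta_3$, explaining the case split in the definition of $g$.

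Integrating the closed-form CDF $\mathbb{P}(|u|\geq t) = 1 - \tfrac{4t}{D_{\rm L}} + \tfrac{4t^2}{D_{\rm L}^2}$ (obtained by integrating $f_u$) with $t = \sqrt{\theta_4 - v^2}$ against the uniform density $1/D_{\rm W}$ of $v$, and using the $v \mapsto -v$ symmetry to restrict to positive $v$, yields
\begin{align*}
\tfrac{2}{D_{\rm W}}\int_{\theta_3}^{\theta_2}\!\Bigl[(1+\theta_1) - \tfrac{4v^2}{D_{\rm L}^2} - \tfrac{4\sqrt{\theta_4-v^2}}{D_{\rm L}}\Bigr]dv,
\end{align*}
after absorbing $4\theta_4/D_{\rm L}^2 = \theta_1$. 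The antiderivative of the polynomial part is immediate, and the radical is handled by the standard identity $\int \sqrt{\theta_4 - v^2}\,dv = \tfrac{v}{2}\sqrt{\theta_4-v^2} + \tfrac{\theta_4}{2}\sin^{-1}(v/\sqrt{\theta_4})$. The resulting antiderivative matches $g(y)$ exactly, giving the second term of \eqref{eqlemma2}.

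The main obstacle is the careful bookkeeping of the three geometric regimes — the disk $u^2 + v^2 < \theta_4$ being strictly inside, clipping the sides of, or entirely containing the rectangular support $[-D_{\rm L}/2,D_{\rm L}/2] \times [-D_{\rm W}/2,D_{\rm W}/2]$ — and verifying that the $\min$/$\max$ clipping in the definitions of $\theta_2$ and $\theta_3$ collapses all three cases into the single compact expression of the lemma; the triangular density of $u$ and the elementary integrals are otherwise routine.
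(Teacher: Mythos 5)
Your proposal is correct and follows essentially the same route as the paper's proof: condition on $y_m$, split the event into the region $|y_m|\geq\sqrt{\theta_4}$ (outage regardless of the $x$-coordinates) and the region requiring $|x_2-x_1|/2\geq\sqrt{\theta_4-y_m^2}$, then integrate the tail probability of the triangular-distributed difference over $|y_m|\in[\theta_3,\theta_2]$. The only cosmetic difference is that the paper first normalizes to $[0,1]$ and works with the CDF $2\sqrt{z}-z$ of the squared normalized difference, whereas you use $\mathbb{P}(|u|\geq t)$ directly; the integrands and antiderivatives coincide.
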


{\it Remark 4:} Lemma \ref{lemma2} demonstrates how challenging it is to obtain a closed-form expression for the power outage probability, where its expression is quite involved even for the two-user special case. The reason for this challenge is due to the large number of random variables, e.g., $x_m$ and $y_m$, $1\leq m \leq M$.

\section{Greedy-Allocation-Based OMA Transmission}\label{section 4}
The previous section shows that the use of user-fairness-oriented OMA transmission strategies leads to a particular choice of the antenna location which ensures that no user is too far away from the base station, i.e., $x^*=\frac{1}{M}\sum^{M}_{m=1}x_m$.  However, such a choice is no longer optimal if greed-allocation-based OMA transmission strategies are used. In particular, in this section, we consider the following system throughput maximization problem:
 \begin{problem}\label{pb:8} 
  \begin{alignat}{2}
\underset{P_{m}\geq0 , x  }{\rm{max}}  &\quad   \sum^{M}_{m=1}R_m^{\rm OMA}  \label{8tst:1}
\\ s.t. &\quad  R_m^{\rm OMA}\geq R\quad 1\leq m \leq M\\ &\quad \sum^{M}_{m=1}  P_{m}\leq P, \quad -\frac{D_{\rm L}}{2}\leq x \leq \frac{D_{\rm L}}{2}   \label{8tst:2}   .
  \end{alignat}
\end{problem} 
Unlike problems \eqref{pb:1} and \eqref{pb:6}, problem \eqref{pb:8} is much more challenging to solve, mainly due to the non-convex objective function in \eqref{8tst:1}. In order to obtain insightful understandings, the two-user special case, $M=2$, is focused on in the following. 

Similar to the proof for Lemma \ref{lemma1}, the closed-form solutions are first obtained for $P_m$ by assuming that the antenna location, $x$, is fixed. Therefore, problem \eqref{pb:8} is simplified as follows:
 \begin{problem}\label{pb:9} 
  \begin{alignat}{2}
\underset{P_{m}  }{\rm{max}}  &\quad   \sum^{M}_{m=1}\log\left(
  1+\frac{\eta P_m}{\sigma^2  \left| {\boldsymbol \psi} _m - {\boldsymbol \psi}_{\rm OMA}\right|^2}
  \right)   
\\ s.t. &\quad P_m\geq \epsilon (x-x_m)^2+\tau_m  \quad 1\leq m \leq M  \label{9tst:1} \\ &\quad \sum^{M}_{m=1}  P_{m}\leq P  \label{9tst:2}   .
  \end{alignat}
\end{problem}  
It is straightforward to verify that $P$ needs to be no smaller than $\epsilon (\tau_{1x}+\tau_{2x})$, otherwise, problem \eqref{pb:9} is infeasible. Therefore,  $P\geq\epsilon (\tau_{1x}+\tau_{2x})$ is assumed in the remainder of the section. 
 The following lemma provides a closed-form solution of problem \eqref{pb:9}.

 \begin{lemma}\label{lemma3}
Assuming that $x$ is fixed, the optimal solution of problem \eqref{pb:9} is given by
  \begin{align}
\left\{\begin{array}{l}  
 P_1=P  -\epsilon \tau_{2x}  \\ 
 P_2=\epsilon \tau_{2x}  
 \end{array}\right.,
 \end{align} 
 if $
 \frac{1}{P  -\epsilon \tau_{2x} +\frac{\sigma^2 \tau_{1x}}{\eta }  }  - \frac{1}{\epsilon \tau_{2x}+\frac{\sigma^2 \tau_{2x}}{\eta } } \geq 0$. If $
 \frac{1}{P  -\epsilon \tau_{2x} +\frac{\sigma^2 \tau_{1x}}{\eta }  }  - \frac{1}{\epsilon \tau_{2x}+\frac{\sigma^2 \tau_{2x}}{\eta } } <0 $ and $\frac{1}{P  -\epsilon \tau_{1x} +\frac{\sigma^2 \tau_{2x}}{\eta }  }  - \frac{1}{\epsilon \tau_{1x}+\frac{\sigma^2 \tau_{1x}}{\eta } } \geq 0$, the optimal solution is given by
    \begin{align}
\left\{\begin{array}{l}  
 P_1=\epsilon \tau_{1x}  \\
 P_2=P  -\epsilon \tau_{1x}  
 \end{array}\right..
 \end{align} 
 Otherwise, the optimal solution is given by
   \begin{align}\label{kktx34}
\left\{\begin{array}{l}  
P_1=\frac{P}{2}+    \frac{\sigma^2 \tau_{2x}}{2\eta }- \frac{\sigma^2 \tau_{1x}}{2\eta } \\ 
P_2 = \frac{P}{2}+  \frac{\sigma^2 \tau_{1x}}{2\eta }   - \frac{\sigma^2 \tau_{2x}}{2\eta }  
 \end{array}\right..
 \end{align} 
 \end{lemma}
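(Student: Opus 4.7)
The plan is to exploit that Problem~\eqref{pb:9} is a convex program: the objective is strictly concave in $(P_1,P_2)$ and all constraints are affine, so KKT conditions are necessary and sufficient for global optimality. I would begin by forming the Lagrangian
\[
L = \sum_{m=1}^{2}\log\!\left(1+\frac{\eta P_m}{\sigma^2 \tau_{mx}}\right) - \mu\left(\sum_{m=1}^{2} P_m - P\right) - \sum_{m=1}^{2}\lambda_m\left(\epsilon\tau_{mx} - P_m\right),
\]
with $\mu,\lambda_m\geq 0$, and deriving the stationarity relation
\[
\frac{1}{P_m + \sigma^2\tau_{mx}/\eta} = \mu - \lambda_m,\qquad m=1,2.
\]
A short monotonicity argument then shows that the sum-power budget must bind, since increasing any $P_m$ strictly increases the objective; hence $\mu>0$ and $P_1+P_2=P$.

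Next I would perform a case analysis on which of the two lower-bound constraints is active. For $M=2$ the KKT system admits three non-degenerate configurations: (i) both $\lambda_m=0$; (ii) $\lambda_1=0$, $\lambda_2>0$; and (iii) $\lambda_1>0$, $\lambda_2=0$. The fourth possibility (both $\lambda_m>0$) would force $P=\epsilon(\tau_{1x}+\tau_{2x})$, which is only the boundary of the feasibility assumption already excluded. In configuration (i), stationarity equates $P_1+\sigma^2\tau_{1x}/\eta$ with $P_2+\sigma^2\tau_{2x}/\eta$; combined with $P_1+P_2=P$ this yields exactly~\eqref{kktx34}. In configuration (ii) I would set $P_2=\epsilon\tau_{2x}$, $P_1=P-\epsilon\tau_{2x}$, solve for $\mu$ from the $m=1$ stationarity equation, and back-substitute into the $m=2$ equation to obtain
\[
\lambda_2 \;=\; \frac{1}{P-\epsilon\tau_{2x}+\sigma^2\tau_{1x}/\eta} \;-\; \frac{1}{\epsilon\tau_{2x}+\sigma^2\tau_{2x}/\eta};
\]
requiring $\lambda_2\geq 0$ reproduces the first inequality stated in the lemma. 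Configuration (iii) follows by a symmetric argument with the indices swapped.

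To close the proof I would verify mutual exclusivity and exhaustiveness of the three conditions. The key observation is that the interior candidate~\eqref{kktx34} is feasible precisely when neither $P_m$ falls below $\epsilon\tau_{mx}$, and a direct substitution shows that $P_2<\epsilon\tau_{2x}$ under~\eqref{kktx34} is algebraically equivalent to the first inequality in the lemma, with the symmetric statement holding for $P_1$. Thus each region in parameter space is tied to exactly one KKT configuration. The main obstacle is precisely this dichotomy step: one must check that the sign of the Lagrange multiplier $\lambda_2$ emerging from configuration (ii) is governed by the very same comparison that determines whether~\eqref{kktx34} violates the $m=2$ lower bound, so the three cases form a disjoint partition of the feasible set. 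Once this consistency is verified, convexity guarantees that the identified KKT point is the unique global optimum, completing the proof.
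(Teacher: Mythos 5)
Your proposal is correct and follows essentially the same route as the paper's proof: both form the Lagrangian of the convex problem \eqref{pb:9}, argue the sum-power constraint binds (the paper shows $\lambda_0\neq 0$ by contradiction from stationarity, you by monotonicity of the objective), and enumerate the four sign patterns of the lower-bound multipliers, with the dual-feasibility requirement $\lambda_m\geq 0$ yielding exactly the threshold conditions in the lemma. Your added verification that the three cases partition the parameter space (via the equivalence between $\lambda_2\geq 0$ and the interior candidate violating the $m=2$ lower bound) is a small refinement the paper leaves implicit, but it does not change the argument.
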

 \begin{proof}
 See Appendix \ref{proof3}. 
 \end{proof}
 {\it Remark 5:} Lemma \ref{lemma3} illustrates how challenging to solve the considered throughput maximization problem, where there are multiple possible expressions for its optimal solution, even for the two-user special case. 
 
By using Lemma \ref{lemma3}, the optimal solution of problem \eqref{pb:9} can be obtained by carrying out a simple one-dimensional search. In particular, for each point on the waveguide, Lemma \ref{lemma3} can be used to find the maximum throughput offered by this point. By enumerating all the points on the waveguide, the optimal solution of problem \eqref{pb:9} can be obtained.   
 
\subsection{A High-SNR Performance Analysis} \label{subsection high snr}
To find an analytical expression for the optimal solution of $x$, a single concise expression is needed for the optimal solution of $P_m$, which is shown to be impossible by Lemma \ref{lemma3}. However, we note that at high SNR, i.e., $P\rightarrow \infty$, $
 \frac{1}{P  -\epsilon \tau_{2x} +\frac{\sigma^2 \tau_{1x}}{\eta }  }  - \frac{1}{\epsilon \tau_{2x}+\frac{\sigma^2 \tau_{2x}}{\eta } } <0 $ and $\frac{1}{P  -\epsilon \tau_{1x} +\frac{\sigma^2 \tau_{2x}}{\eta }  }  - \frac{1}{\epsilon \tau_{1x}+\frac{\sigma^2 \tau_{1x}}{\eta } } < 0$, which leads to the following corollary.
 
 \begin{corollary}\label{corollary3}
 By assuming that $x$ is fixed, at high SNR, the solution shown in \eqref{kktx34} is optimal for problem \eqref{pb:9}. 
 \end{corollary}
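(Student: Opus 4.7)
The plan is to invoke Lemma \ref{lemma3} directly and show that the third (catch-all) branch, which produces the closed form \eqref{kktx34}, is the one that applies in the high-SNR limit. The lemma partitions the feasible region via the signs of two explicit expressions; if both exclusion conditions on the first two branches are satisfied, the third branch is necessarily active. Thus the proof reduces to taking the limit $P\to\infty$ inside those two expressions.

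First I would analyze the exclusion condition that rules out the first branch, namely
\[
\frac{1}{P - \epsilon \tau_{2x} + \frac{\sigma^2 \tau_{1x}}{\eta}} - \frac{1}{\epsilon \tau_{2x} + \frac{\sigma^2 \tau_{2x}}{\eta}} < 0.
\]
The first reciprocal tends to zero as $P\to\infty$, while the second is a strictly positive finite constant, since $\tau_{2x}=(x-x_2)^2+y_2^2+d^2\geq d^2>0$. Hence the difference is strictly negative for all $P$ exceeding a finite threshold. Interchanging the indices $1\leftrightarrow 2$, the same argument yields the second exclusion
\[
\frac{1}{P - \epsilon \tau_{1x} + \frac{\sigma^2 \tau_{2x}}{\eta}} - \frac{1}{\epsilon \tau_{1x} + \frac{\sigma^2 \tau_{1x}}{\eta}} < 0,
\]
with the strict positivity of the constant term guaranteed by $\tau_{1x}\geq d^2>0$.

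With both exclusion conditions simultaneously met, the third branch of Lemma \ref{lemma3} is active, and the optimal power allocation collapses to \eqref{kktx34}. There is essentially no technical obstacle here: Lemma \ref{lemma3} has already done all the heavy lifting, and the high-SNR argument reduces to two elementary limits. The only subtlety worth flagging is that the $P$-threshold above which both limits become valid depends on the fixed location $x$, but this is harmless, since Corollary \ref{corollary3} is stated for fixed $x$ and, should one later wish a uniform threshold, the search over $x\in[-D_{\rm L}/2,D_{\rm L}/2]$ takes place on a compact interval on which $\tau_{1x}$ and $\tau_{2x}$ remain uniformly bounded away from zero by $d^2$.
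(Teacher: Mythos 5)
Your proposal is correct and follows essentially the same route as the paper: the paper's justification for Corollary~\ref{corollary3} is precisely the observation that as $P\to\infty$ both branch conditions of Lemma~\ref{lemma3} become negative, forcing the third (``otherwise'') branch and hence \eqref{kktx34}. Your added remarks that $\tau_{1x},\tau_{2x}\geq d^2>0$ (so the subtracted reciprocals stay bounded away from zero) and that the threshold on $P$ depends on the fixed $x$ only tighten the argument the paper leaves implicit.
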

 
 By using Lemma \ref{lemma3} and Corollary \ref{corollary3}, problem \eqref{pb:9} can be approximated at high SNR as follows: 
  \begin{problem}\label{pb:10} 
  \begin{alignat}{2}
\underset{  x  }{\rm{max}}  &\quad  \log\left(
1+\frac{\eta \left(\frac{P}{2}+    \frac{\sigma^2 \tau_{2x}}{2\eta }- \frac{\sigma^2 \tau_{1x}}{2\eta }\right)}{\sigma^2 \tau_{1x}}
\right) \\\nonumber & \quad\quad+  \log\left(
1+\frac{\eta \left(\frac{P}{2}+  \frac{\sigma^2 \tau_{1x}}{2\eta }   - \frac{\sigma^2 \tau_{2x}}{2\eta } \right)}{\sigma^2 \tau_{2x}}
\right)  .
  \end{alignat}
\end{problem}  
With some straightforward algebraic manipulations, problem \eqref{pb:10} can be recast to the following equivalent form:  
  \begin{problem}\label{pb:11} 
  \begin{alignat}{2}
\underset{  x  }{\rm{max}}  &\quad  
 \frac{\tau_{1x}}{4\tau_{2x}} + \frac{\tau_{2x}}{4\tau_{1x}}   + 
\frac{\eta P}{2\sigma^2 \tau_{1x}}
\\\nonumber &\quad+ \frac{\eta P}{2\sigma^2\tau_{2x}} +\frac{\eta^2 P^2}{4\sigma^4 \tau_{1x}\tau_{2x}}. 
  \end{alignat}
\end{problem}  
Again by applying the high SNR assumption, problem \eqref{pb:11} can be approximated as follows:
  \begin{problem}\label{pb:12} 
  \begin{alignat}{2}
\underset{  x  }{\rm{min}}  &\quad   f(x)\triangleq \left((x-x_1)^2+y_1^2+d^2\right)\left((x-x_2)^2+y_2^2+d^2\right). 
  \end{alignat}
\end{problem}  

 \begin{figure}[t] \vspace{-0.2em}
\begin{center}
\subfigure[Random case 1 ]{\label{fig2a}\includegraphics[width=0.15\textwidth]{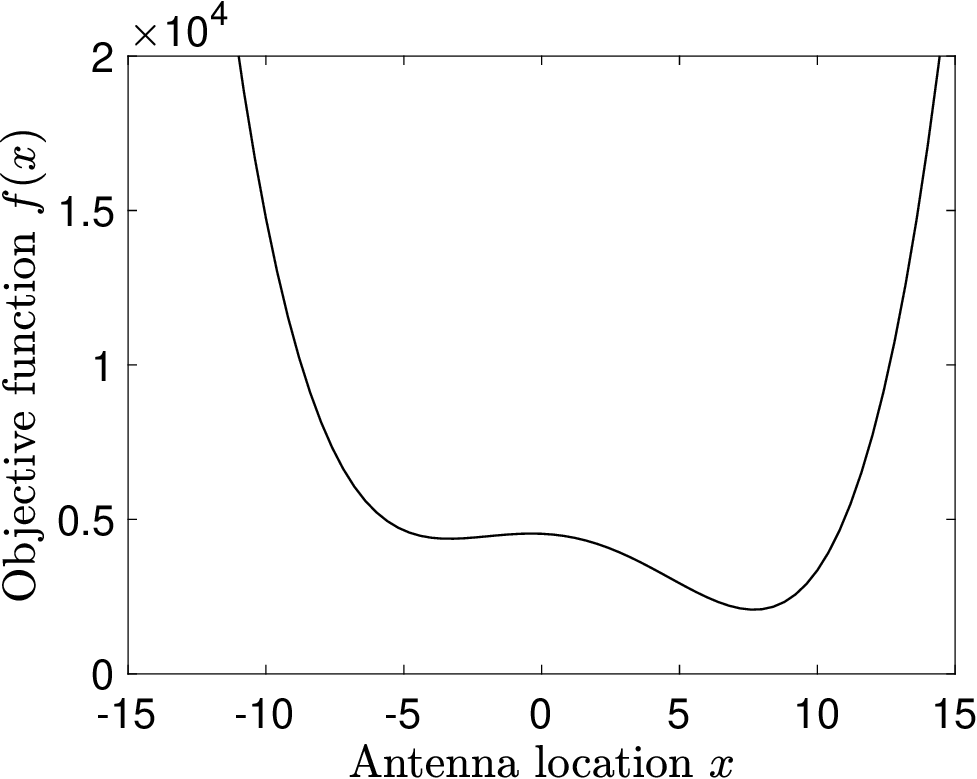}} \subfigure[Random case 2]{\label{fig2b}\includegraphics[width=0.15\textwidth]{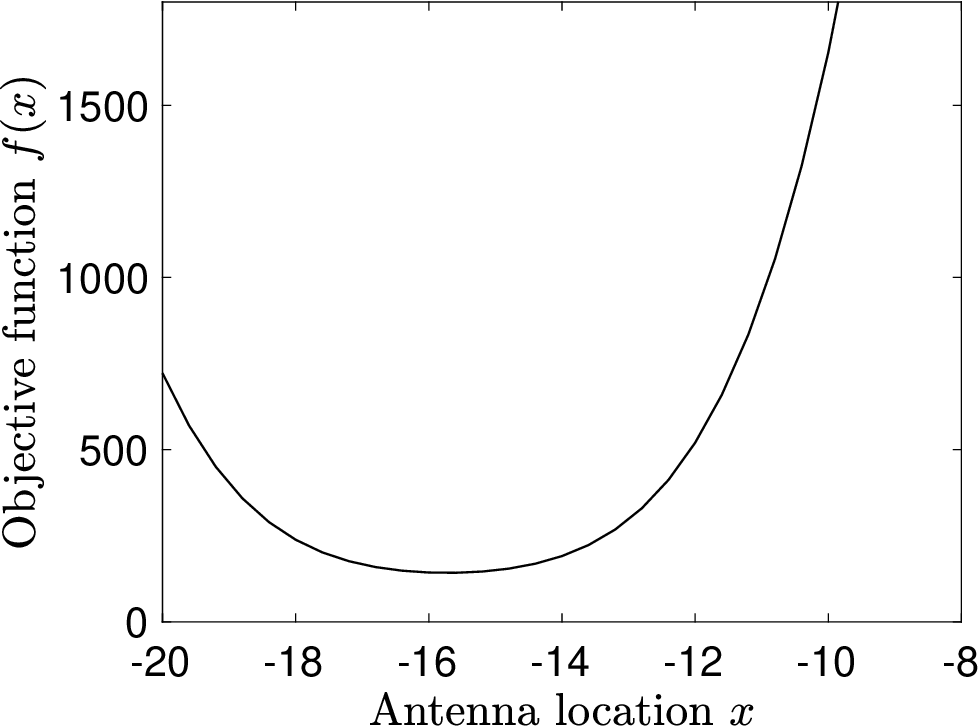}} 
\subfigure[Random case 3]{\label{fig2c}\includegraphics[width=0.15\textwidth]{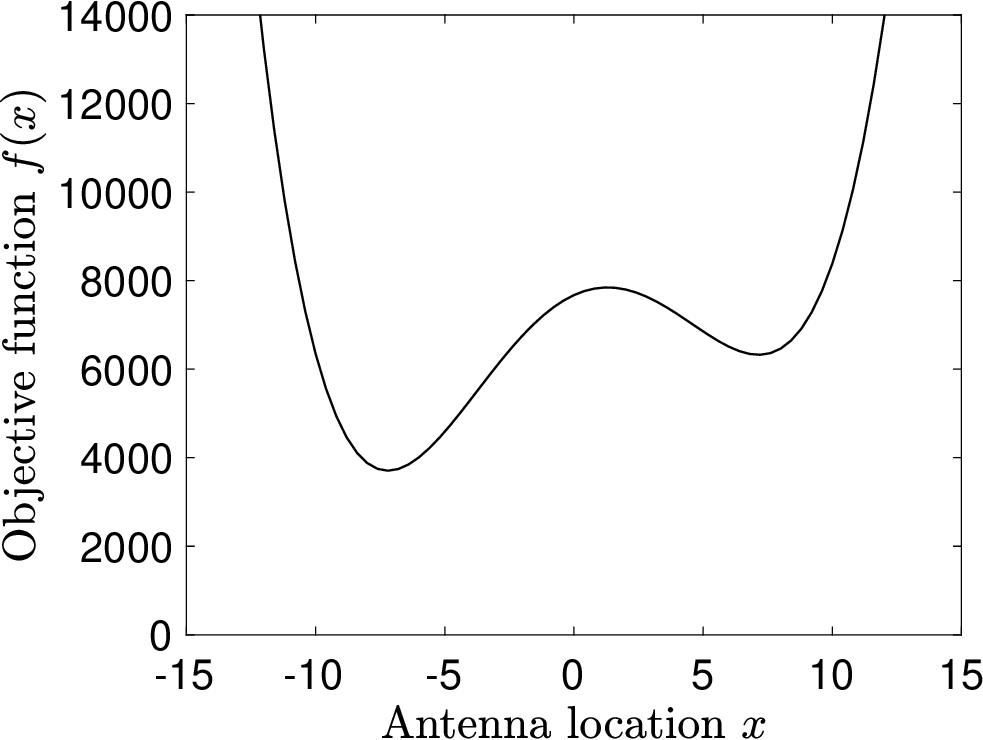}} \vspace{-1.5em}
\end{center}
\caption{The convexity of the objective function of problem \eqref{pb:12}, where $D_{\rm W}=10$ m,  $D_{\rm L}=40$ m  and $d=3$ m. \vspace{-1em} }\label{fig2}\vspace{-1.2em}
\end{figure}

As shown in Fig. \ref{fig2}, $f(x)$ is not a convex function, and may have multiple minima.   
The first order derivative of $f(x)$ is given by
    \begin{align}\label{cubic}
 f'(x)=
 &2(x-x_1)\left((x-x_2)^2+y_2^2+d^2\right)\\\nonumber
 &+2(x-x_2)\left((x-x_1)^2+y_1^2+d^2\right) ,
 \end{align} 
 where the optimal solution of $x$ must be one of the roots of the above cubic function. Therefore, a low-complexity approach to solve problem \eqref{pb:9} is to find all the roots of $f'(x)=0$, calculate the corresponding throughputs achieved by these roots by using Lemma \ref{lemma3}, and then select the best root which yields the largest throughput. Compared to the one-dimensional search-based approach, this root-based approach is of low complexity since there are at most three roots for the cubic function in \eqref{cubic}. 
 
 
 Although a closed-form expression for the optimal antenna location cannot be obtained for problem \eqref{pb:9}, an insightful understanding can be obtained as shown in the following lemma. 
 
 \begin{lemma}\label{lemma4}
If $|y_1|\leq |y_2|$, the pinching antenna should be placed closer to ${\rm U}_1$, i.e., $|x^*-x_1|\leq |x^*-x_2|$. Otherwise, $|x^*-x_1|\geq |x^*-x_2|$. 
 \end{lemma}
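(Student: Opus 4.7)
The plan is to exploit a reflection symmetry of $f$ about the midpoint of the two user abscissas. Setting $c=\tfrac{x_1+x_2}{2}$, $\delta=\tfrac{x_2-x_1}{2}$, $A=y_1^2+d^2$, $B=y_2^2+d^2$ and reparameterizing $x=c+u$, one has $(x-x_1)^2=(u+\delta)^2$ and $(x-x_2)^2=(u-\delta)^2$; the reflection $u\mapsto -u$ simply swaps which of these squared distances is paired with $A$ versus $B$. A one-line expansion then produces the clean swap identity
\begin{align*}
f(c+u)-f(c-u)=(B-A)\bigl[(u+\delta)^2-(u-\delta)^2\bigr]=4u\delta(y_2^2-y_1^2).
\end{align*}

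Under the hypothesis $|y_1|\leq|y_2|$, the right-hand side shares its sign with $u\delta$. When $\delta>0$, this gives $f(c+u)\geq f(c-u)$ for all $u\geq 0$ with both points feasible, so any candidate strictly to the right of the midpoint is dominated by its mirror image to the left; hence the minimizer satisfies $x^{\ast}\leq c$, which is precisely $|x^{\ast}-x_1|\leq|x^{\ast}-x_2|$ since $x_1<x_2$. When $\delta<0$, the signs flip and the same reflection argument forces $x^{\ast}\geq c$, which is again $|x^{\ast}-x_1|\leq|x^{\ast}-x_2|$ since now $x_1>x_2$. The ``otherwise'' half of the lemma is obtained by interchanging the roles of the two users, which negates both $\delta$ and $y_2^2-y_1^2$ and therefore preserves the argument in a consistent way.

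The swap identity itself is a routine expansion, so the only delicate step is ensuring the mirror image $c-u$ remains inside $[-D_{\rm L}/2,D_{\rm L}/2]$ whenever $c+u$ does. This is handled by reading off from \eqref{cubic} that $f'(x)>0$ for $x>\max\{x_1,x_2\}$ and $f'(x)<0$ for $x<\min\{x_1,x_2\}$, because in these regions each factor of $f'$ has a fixed sign. Consequently every critical point of $f$, and in particular the minimizer $x^{\ast}$, lies in $[\min\{x_1,x_2\},\max\{x_1,x_2\}]=[c-|\delta|,c+|\delta|]$---an interval symmetric about $c$ inside which the reflection is automatically feasible. I expect this feasibility bookkeeping to be the main non-routine step; the rest is a direct application of the swap identity.
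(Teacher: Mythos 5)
Your proof is correct, and its engine---the reflection of $x$ about the midpoint $c=\tfrac{x_1+x_2}{2}$ and the resulting sign computation $f(c+u)-f(c-u)=(y_2^2-y_1^2)\bigl[(u+\delta)^2-(u-\delta)^2\bigr]$---is exactly the identity the paper uses in the second step of its proof (there it appears as $f(\breve{x})-f(\tilde{x})=\phi_2(1-c)\bigl((\bar{x}+\phi_1)^2-(\bar{x}-\phi_1)^2\bigr)$ with $\phi_2(1-c)=y_1^2-y_2^2$). The difference is in how the identity is deployed. The paper argues in two steps: it first shows, via the signs of $f'(x_1)$, $f'(x_2)$ and $f'(c)$, that a local minimum exists in $[x_1,c]$, and only then uses the reflection to show that any local minimum in $[c,x_2]$ is dominated by its mirror image. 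You apply the swap identity globally to every feasible point, which makes the existence step unnecessary and yields a shorter, arguably cleaner argument: every point strictly on the far side of the midpoint is dominated outright. You also explicitly verify that the reflection stays feasible by noting $f'>0$ to the right of $\max\{x_1,x_2\}$ and $f'<0$ to the left of $\min\{x_1,x_2\}$, so the minimizer is confined to $[\min\{x_1,x_2\},\max\{x_1,x_2\}]$, an interval symmetric about $c$ and contained in $[-D_{\rm L}/2,D_{\rm L}/2]$; the paper leaves this point implicit. The only cosmetic caveat is that when $y_1^2=y_2^2$ the domination is non-strict, so the correct reading is that some optimal location satisfies $|x^*-x_1|\leq|x^*-x_2|$---the same looseness present in the paper's statement.
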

 \begin{proof}
 See Appendix \ref{proof4}.
 \end{proof}

\section{Resource Allocation for NOMA Transmission}\label{section 4}
In this section, we consider the situation in which all users are served simultaneously via NOMA. SIC is a key component of NOMA systems, but the implementation of SIC makes the decision of the antenna location more challenging. Take power-domain NOMA as an example, which requires the users to be ordered according to their channel conditions; however, a change of the location of the pinching antenna can change the users' channel conditions.

\subsection{An Intuitive Approach}\label{section 41}
An intuitive approach is to build two resource allocation problems by assuming that   ${\rm U}_1$ and ${\rm U}_2$ carry out SIC, respectively.  For example, by assuming that ${\rm U}_1$ carries out SIC (e.g., ${\rm U}_2$ detects its own signal directly), a resource allocation problem can be built as follows. 

In particular, ${\rm U}_2$'s achievable data rate in NOMA is given by \cite{Nomading}
  \begin{align}\label{models}
  R_2^{\rm NOMA} =&  \log\left(
  1+\frac{\frac{ \eta P_2}{ \left| {\boldsymbol \psi} _2 - {\boldsymbol \psi}_{\rm NOMA}\right|^2}}{\frac{ \eta P_1}{  \left| {\boldsymbol \psi} _2 - {\boldsymbol \psi}_{\rm NOMA}\right|^2}+\sigma^2}
  \right)  \\\nonumber
  =&  \log\left(
  1+\frac{\eta P_2 }{ \eta P_1 +\sigma^2  \left| {\boldsymbol \psi} _2 - {\boldsymbol \psi}_{\rm NOMA}\right|^2}
  \right)  ,
  \end{align}     
  where $P_m$ denotes the transmit power for ${\rm U}_m$'s signal.  ${\rm U}_1$ needs to decode ${\rm U}_2$'s signal with the following data rate,   $ R_0^{\rm NOMA}=\log\left(
  1+\frac{\frac{ \eta P_2}{ \left| {\boldsymbol \psi} _1 - {\boldsymbol \psi}_{\rm NOMA}\right|^2}}{\frac{ \eta P_1}{  \left| {\boldsymbol \psi} _1 - {\boldsymbol \psi}_{\rm NOMA}\right|^2}+\sigma^2}
  \right)  $. If $\min\{  R_2^{\rm NOMA} , R_0^{\rm NOMA}\}\geq R$, SIC can be carried out successfully by ${\rm U}_1$, and ${\rm U}_1$ can decode its own signal with the following data rate: 
     \begin{align}\label{models}
  R_1^{\rm NOMA}  
  =&  \log\left(
  1+\frac{\eta P_1 }{  \sigma^2  \left| {\boldsymbol \psi} _1 - {\boldsymbol \psi}_{\rm NOMA}\right|^2}
  \right). 
  \end{align}

This section is to focus on the total transmit power minimization problem as follows:
     \begin{problem}\label{pb:13} 
  \begin{alignat}{2}
\underset{P_{m}\geq0 , x  }{\rm{min}} &\quad   \sum^{M}_{m=1}  P_{m}\label{13tst:1}
\\ s.t. &\quad    R_m^{\rm NOMA}\geq R, \quad 0\leq m\leq 2   \label{13tst:2}  
\\  &\quad   P_m\geq 0, \quad -\frac{D_{\rm L}}{2}\leq x \leq \frac{D_{\rm L}}{2} \label{13tst:3} . 
  \end{alignat}
\end{problem}  
With some straightforward algebratic manipulations, problem \eqref{pb:13} can be recast as follows:  
    \begin{problem}\label{pb:14} 
  \begin{alignat}{2}
\underset{P_{m}\geq0 , x  }{\rm{min}} &\quad   \sum^{M}_{m=1}  P_{m}\label{14tst:1}
\\ s.t. &\quad      
-P_2+\frac{\tilde{\epsilon} \eta }{\sigma^2}P_1  + \tilde{\epsilon}  (x-x_2)^2 + \tilde{\tau}_2 \leq 0, \label{14tst:2}  \\
&\quad      
-P_2+\frac{\tilde{\epsilon} \eta }{\sigma^2}P_1  + \tilde{\epsilon}  (x-x_1)^2 + \tilde{\tau}_1 \leq 0 \label{14tst:2x}  
  \\ &\quad   -P_1+ \tilde{\epsilon} (x-x_1)^2+\tilde{\tau}_1  \leq 0
\\  &\quad   P_m\geq 0, \quad -\frac{D_{\rm L}}{2}\leq x \leq \frac{D_{\rm L}}{2} \label{14tst:3}  ,
  \end{alignat}
\end{problem}  
where $\tilde{\tau}_m = \tilde{\epsilon}(y_m^2+d^2)$ and $\tilde{\epsilon}=\frac{\sigma^2}{\eta}(e^{R}-1)$. We note that  $\tilde{\tau}_m $ and $\tilde{\epsilon}$ are defined differently to  $ {\tau}_m $ and $ {\epsilon}$ because in NOMA each user can use all the $M$ time slots.

A closed-form optimal solution for this optimization problem is difficult to obtain, even if the trivial constraints, e.g., $P_m\geq 0$ and $-\frac{D_{\rm L}}{2}\leq x \leq \frac{D_{\rm L}}{2} $, are omitted. In particular, there are three potential forms with complicated feasibility conditions for the optimal solution of problem \eqref{pb:13}. We note that for the case with ${\rm U}_2$ carrying out SIC, there are three additional forms, which makes this intuitive approach less insightful and more complex. 

\subsection{A Low-Complexity Optimal Approach}\label{section 42}
Prior to resource allocation, the users are first ordered based on $y_m^2$. In particular, the users are ordered to ensure that $y_1^2\leq y_2^2$. Because this user ordering is not related to the pinching antenna location, it can be carried out without knowing the optimal antenna location.

The proposed low-complexity optimal approach starts with two assumptions. One is that the optimal solution makes ${\rm U}_1$ the strong user, i.e., $ \left| {\boldsymbol \psi} _1 - {\boldsymbol \psi}_{\rm NOMA}\right|\leq  \left| {\boldsymbol \psi} _2 - {\boldsymbol \psi}_{\rm NOMA}\right|$. The other is that the constraints, $P_m\geq 0$ and $-\frac{D_{\rm L}}{2}\leq x \leq \frac{D_{\rm L}}{2} $, can be omitted. At the end of the section, it will be shown that the obtained optimal solution satisfies the two assumptions. With the two assumptions, problem \eqref{pb:14} can be reduced to the following:
  \begin{problem}\label{pb:15} 
  \begin{alignat}{2}
\underset{P_{m} , x  }{\rm{min}} &\quad   \sum^{M}_{m=1}  P_{m}\label{15tst:1}
\\ s.t. &\quad      
-P_2+\frac{\tilde{\epsilon} \eta }{\sigma^2}P_1  + \tilde{\epsilon}  (x-x_2)^2 + \tilde{\tau}_2 \leq 0 \label{15tst:2}  
  \\ &\quad   -P_1+ \tilde{\epsilon} (x-x_1)^2+\tilde{\tau}_1  \leq 0  .\label{15tst:3}  
  \end{alignat}
\end{problem}  
The following lemma shows the optimal solution of problem \eqref{pb:15}.
\begin{lemma}\label{lemma5}
The optimal solution of problem \eqref{pb:15} is given by
 \begin{align}\label{15op}
\left\{\begin{array}{l}  
P_1^* = \tilde{\epsilon} (x^*-x_1)^2+\tilde{\tau}_1\\ 
P_2^* = \frac{\tilde{\epsilon} \eta }{\sigma^2}P_1  + \tilde{\epsilon}  (x^*-x_2)^2 + \tilde{\tau}_2\\  x^*  
 =   \frac{x_2 }{   e^{R}+1} +\frac{  e^{R} x_1  }{   e^{R}+1} .
 \end{array}\right..
 \end{align}
\end{lemma}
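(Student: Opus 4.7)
The plan is to argue that both inequality constraints \eqref{15tst:2} and \eqref{15tst:3} are active at optimum, then reduce the problem to a scalar unconstrained quadratic minimization in $x$. First, I would observe that the objective $P_1+P_2$ is strictly increasing in both $P_1$ and $P_2$. Constraint \eqref{15tst:3} involves only $P_1$ (and $x$), so at optimality it must hold with equality, giving $P_1^*=\tilde{\epsilon}(x-x_1)^2+\tilde{\tau}_1$. Substituting this into constraint \eqref{15tst:2}, the constraint becomes a pure lower bound on $P_2$ for any fixed $x$, so it too must be active at optimum, giving $P_2^*=\frac{\tilde{\epsilon}\eta}{\sigma^2}P_1^*+\tilde{\epsilon}(x-x_2)^2+\tilde{\tau}_2$. (Formally this can be derived from the KKT conditions, forcing both Lagrange multipliers to equal $1$ and $1+\frac{\tilde{\epsilon}\eta}{\sigma^2}$ respectively, but the monotonicity argument is cleaner.)

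With both constraints active, the objective collapses to a function of $x$ alone. Using the identity $1+\frac{\tilde{\epsilon}\eta}{\sigma^2}=e^R$, the total power becomes
\begin{align}
P_1^*+P_2^* = e^{R}\!\left[\tilde{\epsilon}(x-x_1)^2+\tilde{\tau}_1\right]+\tilde{\epsilon}(x-x_2)^2+\tilde{\tau}_2.
\end{align}
This is a strictly convex quadratic in $x$, so its unique minimizer is found by setting the derivative to zero: $e^{R}(x-x_1)+(x-x_2)=0$, which immediately yields $x^*=\frac{e^{R}x_1+x_2}{e^{R}+1}$, matching the claimed expression.

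Finally, I would verify the two working assumptions stated before problem \eqref{pb:15}. Non-negativity of $P_1^*, P_2^*$ is immediate since both equal a sum of squares plus $\tilde{\tau}_m\geq 0$. For the ordering assumption $|\boldsymbol{\psi}_1-\boldsymbol{\psi}_{\rm NOMA}|\leq|\boldsymbol{\psi}_2-\boldsymbol{\psi}_{\rm NOMA}|$, a short calculation gives $x^*-x_1=\frac{x_2-x_1}{e^{R}+1}$ and $x^*-x_2=\frac{e^{R}(x_1-x_2)}{e^{R}+1}$, so $(x^*-x_1)^2\leq(x^*-x_2)^2$ because $e^{R}\geq 1$; combined with the pre-ordering $y_1^2\leq y_2^2$ this yields the required inequality. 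The containment $x^*\in[-D_{\rm L}/2,D_{\rm L}/2]$ follows since $x^*$ is a convex combination of $x_1$ and $x_2$.

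The main obstacle I would expect is not the optimization itself, which is essentially a one-line convex quadratic, but rather justifying the two simplifying assumptions cleanly, in particular the SIC ordering. The subtlety is that the ordering depends on $x^*$, which in turn is defined only after assuming the ordering, so the argument must proceed by first solving the relaxed problem \eqref{pb:15} and then checking a posteriori that the resulting $x^*$ is consistent with ${\rm U}_1$ being the strong user; the key enabler is the pre-sorting $y_1^2\leq y_2^2$, which ensures this consistency regardless of the horizontal coordinates $x_1,x_2$.
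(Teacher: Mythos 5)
Your proof is correct and arrives at the same solution, but by a somewhat more elementary route than the paper. The paper works directly with the Lagrangian of problem \eqref{pb:15}: stationarity in $P_2$ gives $\lambda_2=1$, stationarity in $P_1$ gives $\lambda_1=1+\frac{\tilde{\epsilon}\eta}{\sigma^2}$, complementary slackness with these strictly positive multipliers forces both constraints tight, and the stationarity condition in $x$, namely $(x-x_2)+\lambda_1(x-x_1)=0$, yields $x^*$. You instead justify tightness by monotonicity --- decreasing $P_1$ both lowers the objective and relaxes the lower bound on $P_2$ in \eqref{15tst:2}, so \eqref{15tst:3} is active, after which \eqref{15tst:2} is a pure lower bound on $P_2$ and is also active --- and then collapse the problem to an unconstrained strictly convex quadratic in $x$. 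Since $1+\frac{\tilde{\epsilon}\eta}{\sigma^2}=e^{R}$, your first-order condition $e^{R}(x-x_1)+(x-x_2)=0$ is literally the same linear equation the paper solves, so the two derivations are computationally identical; yours is shorter and makes the ``tight constraints plus scalar quadratic'' structure transparent, while the paper's KKT bookkeeping is more mechanical but generalizes routinely. Your closing verification of the two working assumptions is not part of Lemma \ref{lemma5} --- that is exactly the content of Lemma \ref{lemma6} and its proof in the paper --- but it is correct as far as it goes, and your computation of $(x^*-x_1)^2\leq(x^*-x_2)^2$ in fact establishes the ordering $\left|{\boldsymbol \psi}_1-{\boldsymbol \psi}_{\rm NOMA}\right|\leq\left|{\boldsymbol \psi}_2-{\boldsymbol \psi}_{\rm NOMA}\right|$ for any $R\geq 0$, which is slightly stronger than the sufficient condition $R\geq\frac{1}{2}$ quoted in Lemma \ref{lemma6}.
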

\begin{proof}
See Appendix \ref{proof5}.
\end{proof}
By using Lemma \ref{lemma5}, the following lemma about the optimal solution of the original problem in \eqref{pb:13} can be obtained.
\begin{lemma}\label{lemma6}
For $R\geq \frac{1}{2}$, the solution shown in \eqref{15op} is optimal for problem \eqref{pb:13}. 
\end{lemma}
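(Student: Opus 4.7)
The plan is to lift Lemma \ref{lemma5}, which solves the relaxed problem \eqref{pb:15}, to the original problem \eqref{pb:13} by verifying that its closed-form solution $(P_1^*,P_2^*,x^*)$ satisfies every constraint of the tighter problem \eqref{pb:14}. Since \eqref{pb:15} is obtained from \eqref{pb:14} by discarding the SIC inequality \eqref{14tst:2x} together with the trivial box constraints, feasibility of the relaxed optimizer for the tighter problem immediately upgrades it to the true optimum.

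First I would dispatch the box constraints. Non-negativity of $P_1^*$ and $P_2^*$ is immediate, since each expression in \eqref{15op} is a sum of non-negative quantities. The antenna location $x^* = \tfrac{1}{e^R+1}x_2 + \tfrac{e^R}{e^R+1}x_1$ is a convex combination of $x_1$ and $x_2$ with positive weights summing to one, so it lies in $[-D_{\rm L}/2, D_{\rm L}/2]$ as both $x_1$ and $x_2$ do.

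Next I would verify \eqref{14tst:2x}. Because \eqref{15tst:2} is binding at the Lemma \ref{lemma5} solution, \eqref{14tst:2x} reduces algebraically to the strong-user inequality $\left|{\boldsymbol \psi}_1-{\boldsymbol \psi}_{\rm NOMA}\right|^2 \leq \left|{\boldsymbol \psi}_2-{\boldsymbol \psi}_{\rm NOMA}\right|^2$. Substituting the closed-form $x^*$ and simplifying, this becomes
\[
\frac{(e^R-1)(x_2-x_1)^2}{e^R+1} + (y_2^2 - y_1^2) \geq 0,
\]
which is automatic thanks to the a priori user ordering $y_1^2 \leq y_2^2$ combined with $R > 0$. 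This establishes feasibility of the Lemma \ref{lemma5} solution for \eqref{pb:14}, and since it attains the minimum of the looser problem \eqref{pb:15}, it is also optimal for \eqref{pb:14}, hence for \eqref{pb:13}.

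The final step, where the hypothesis $R \geq 1/2$ is consumed, is to rule out the alternative NOMA configuration in which ${\rm U}_2$ performs SIC. A mirror-image derivation (swapping the roles of the two users in Lemma \ref{lemma5}) produces a candidate at $x^{**} = \frac{x_1 + e^R x_2}{e^R+1}$, and the gap between the two total-power values simplifies to $\tilde{\epsilon}(e^R-1)(y_2^2-y_1^2) \geq 0$. The condition $R \geq 1/2$ is invoked to neutralize, in a uniform manner, the three candidate optimum forms flagged at the end of Section \ref{section 41}, ensuring that no alternative feasibility branch strictly improves on the Lemma \ref{lemma5} value. The main obstacle will be precisely this last step: while each individual inequality is elementary, one must juggle three candidate optimum forms on each side of the SIC ordering, and $R \geq 1/2$ appears to be the threshold at which all the resulting comparisons simultaneously go through.
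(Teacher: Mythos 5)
Your first two steps reproduce the paper's proof exactly. The paper's argument is precisely to verify that the Lemma \ref{lemma5} solution satisfies the two assumptions that were dropped in passing from \eqref{pb:14} to \eqref{pb:15}: (i) the box and non-negativity constraints, handled by noting $x^*$ is a convex combination of $x_1$ and $x_2$; and (ii) the strong-user ordering $(x^*-x_1)^2+y_1^2 \leq (x^*-x_2)^2+y_2^2$, which is exactly your reduction of \eqref{14tst:2x} via the binding constraint \eqref{15tst:2}, and which the paper evaluates to $\frac{(x_2-x_1)^2}{(e^R+1)^2}\left(1-e^{2R}\right)+y_1^2-y_2^2\leq 0$ — the same inequality you obtained.

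Your third step, however, is not needed and should be dropped. Problem \eqref{pb:13} is formulated with ${\rm U}_1$ as the SIC-performing user (its constraints are $R_m^{\rm NOMA}\geq R$ for $0\leq m\leq 2$ with the rates defined in Section \ref{section 41} under that decoding order), so the lemma's claim of optimality for \eqref{pb:13} requires no comparison against the mirror configuration in which ${\rm U}_2$ performs SIC; the paper makes no such comparison, and the "three candidate forms" you cite belong to the intuitive approach of Section \ref{section 41} that the low-complexity approach deliberately bypasses. Your unease about where $R\geq\frac{1}{2}$ is consumed is in fact well founded: the paper invokes it only in the strong-user inequality above, where, as your own computation shows, $R>0$ together with $y_1^2\leq y_2^2$ already suffices. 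So once the phantom third step is removed, your proof is complete and coincides with the paper's.
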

\begin{proof} 
See Appendix \ref{proof6}.
\end{proof} 

 \subsection{Performance Analysis}  
  The closed-form expression for the optimal antenna location facilitates insightful performance analysis, as shown in this subsection. 
  
{\it 1) Where to place the pinching antenna?}   
 For a large $R$, the optimal antenna location shown in \eqref{15op} can be approximated as follows:
\begin{align}
 x^* \approx  & \frac{x_2 }{   e^{R}+1} +  x_1   \rightarrow x_1, 
   \end{align}
which means that if the target data rate is large, it is preferable to activate the pinching antenna very close to ${\rm U}_1$, a user who is near the waveguide. It is worth pointing out that the value of $y_m$ is important to decide the optimal antenna location, which is unlike the OMA case shown in Section \ref{section 2}.  
 
{\it 2) An analytical comparison to OMA:}
The overall transmit power required by OMA-assisted conventional-antenna systems can be straightforwardly obtained as follows: 
\begin{align} 
P^{\rm OMA} =&  \epsilon(x_1^2+y_1^2+d^2) +  \epsilon(x_2^2+y_2^2+d^2) .
\end{align}
By using \eqref{15op}, the overall transmit power achieved by NOMA is given by  
\begin{align}
P^{\rm NOMA} =&   \tilde{\epsilon}  (x^*-x_2)^2 + \tilde{\tau}_2 + \left(\frac{\tilde{\epsilon} \eta }{\sigma^2}+1\right)\left(\tilde{\epsilon} (x^*-x_1)^2+\tilde{\tau}_1\right) .
\end{align}

To facilitate the performance analysis, define $d^{\rm OMA}_m=x_m^2+y_m^2+d^2$, and $d^{\rm NOMA}_m=(x^*-x_m)^2+y_m^2+d^2$. Therefore, $P^{\rm NOMA} $ is simplified as follows:
\begin{align}
P^{\rm NOMA} =&   \tilde{\epsilon}d^{\rm NOMA}_2 + \left(\frac{\tilde{\epsilon} \eta }{\sigma^2}+1\right) \tilde{\epsilon}d^{\rm NOMA}_1 \\\nonumber
=& \frac{\sigma^2}{\eta}(e^{R}-1)\left( d^{\rm NOMA}_2 + e^R d^{\rm NOMA}_1\right).
\end{align} 

Therefore, the difference between the transmit powers required by OMA and NOMA is given by 
\begin{align}
\Delta_P \triangleq & P^{\rm OMA} - P^{\rm NOMA}   
    \\\nonumber
  =& \frac{\sigma^2}{\eta}\left(e^{MR}-1\right)\left(
  d^{\rm OMA}_1+d^{\rm OMA}_2
  \right)\\\nonumber &-\frac{\sigma^2}{\eta}(e^{R}-1)\left( d^{\rm NOMA}_2 + e^R d^{\rm NOMA}_1\right)  .
\end{align}
which can be approximated with large $R$ as follows:
\begin{align}\nonumber
\Delta_P   
    \approx & \left[\left(e^{R} d^{\rm OMA}_2-d^{\rm NOMA}_2\right) +e^{R}\left(d^{\rm OMA}_1 -   d^{\rm NOMA}_1\right)  \right]\\  &\times \frac{\sigma^2}{\eta}\left(e^{R}-1\right).
\end{align}

As discussed previously, in NOMA, the optimal antenna location is closer to ${\rm U}_1$, which means that  $d^{\rm NOMA}_1\leq d^{\rm OMA}_1$ and hence the second term, $\left(d^{\rm OMA}_1 -   d^{\rm NOMA}_1\right) $, is positive. Although it is possible that $d^{\rm NOMA}_2\geq d^{\rm OMA}_2$, $e^{R} d^{\rm OMA}_2-d^{\rm NOMA}_2$ should be still positive, with a large $R$, due to the expoential factor $e^R$, e.g., $e^3\approx 20.1$ and $e^5\approx 148.4$.  Therefore, with a large $R$, $\Delta_P\geq 0$, i.e., the use of NOMA can reduce the energy consumption, compared to OMA.

\section{Numerical Studies}
In this section, computer simulation results will be used to evaluate the performance of the proposed resource allocation algorithms.  For the conducted computer simulations, the noise power is set as $-90$ dBm,  $d=3$ m, $D_{\rm W}=10$ m, and $f_c=28$ GHz, as in \cite{mypa}.   Conventional-antenna systems are used as the benchmarking scheme, i.e., the base station is equipped with a conventional antenna that is fixed at the center of the service area. We note that the resource optimization problem for the conventional-antenna system is a special case of the pinching-antenna one, where the details of the corresponding optimal power allocation are omitted due to space limitations. Without clustering, the users are uniformally distributed in the considered service area, i.e.,  $-\frac{D_{\rm L}}{2}\leq x_m\leq \frac{D_{\rm L}}{2}$ and $-\frac{D_{\rm W}}{2}\leq y_m\leq \frac{D_{\rm W}}{2}$. For the clustering scheme, the users are scheduled to satisfy the following criterion: $-\frac{D_{\rm L}}{4}\leq x_m\leq -\frac{D_{\rm L}}{8}$ and $-\frac{D_{\rm W}}{2}\leq y_m\leq \frac{D_{\rm W}}{2}$.

 \begin{figure}[t] \vspace{-0.2em}
\begin{center}
\subfigure[$M=2$ ]{\label{fig3a}\includegraphics[width=0.45\textwidth]{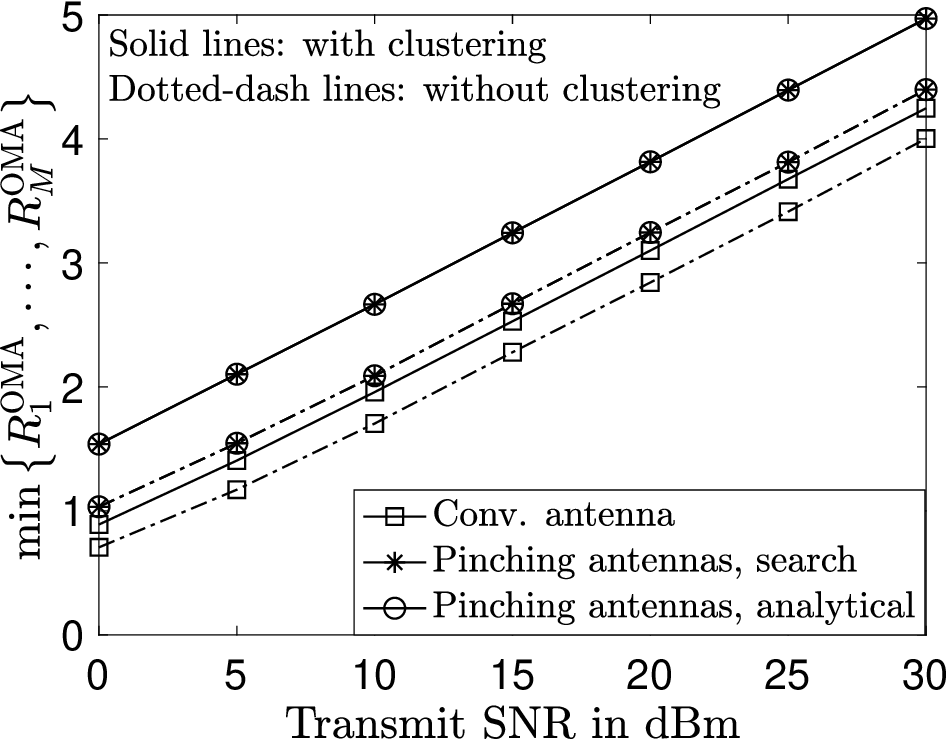}} \subfigure[$M=5$]{\label{fig3b}\includegraphics[width=0.45\textwidth]{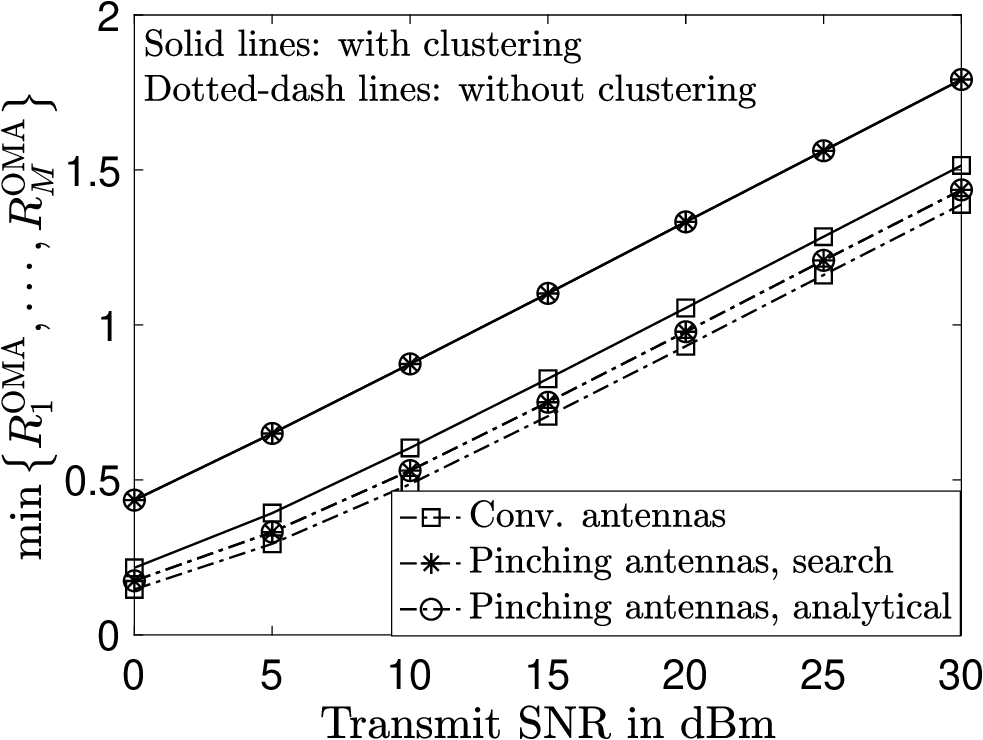}} 
\end{center}
\caption{The user-fairness performance, measured by $ \min\left\{R_1^{\rm OMA}, \cdots, R_M^{\rm OMA}\right\} $ which is the objective function of problem \eqref{pb:1}, achieved by the considered OMA transmission schemes, where   $D_{\rm L}=40$ m.   \vspace{-1em} }\label{fig3}\vspace{-1.2em}
\end{figure}

In Fig. \ref{fig3}, the considered OMA transmission schemes are evaluated by using the user-fairness performance metric,  $ \min\left\{R_1^{\rm OMA}, \cdots, R_M^{\rm OMA}\right\} $ which is the objective function of problem \eqref{pb:1}.  Recall that a closed-form expression for the optimal solution of problem \eqref{pb:1} can be obtained, as shown in Lemma \ref{lemma1}. To evaluate the accuracy of the obtained analytical optimal solution, the performance of an exhaustive-search-based algorithm is also shown in the figure. In particular, a one-dimensional search is carried out to enumerate all possible antenna locations, where problem \eqref{pb:1} becomes a classic convex optimization problem with respect to the power allocation coefficients only.  Fig. \ref{fig3} shows that the analytical solution in Lemma \ref{lemma1} achieves the same performance as the search-based algorithm, which verifies the optimality of the obtained analytical solution. In addition, Fig. \ref{fig3} also demonstrates that the use of clustering can significantly improve the performance gain of pinching antennas over conventional antennas. An interesting observation from Fig. \ref{fig3} is that, without clustering, the performance gap between pinching and conventional antennas is reducing by increasing $M$, which can be explained in the following. As shown in Lemma \ref{lemma1}, the otpimal antenna location is given by $x^*=\frac{1}{M}  \sum^{M}_{m=1}x_m $. If $M\rightarrow \infty$, the optimal location for the pinching antenna becomes the center of the service area, which is identical to conventional-antenna systems. 

 \begin{figure}[t] \vspace{-0.2em}
\begin{center}
\subfigure[$M=2$ ]{\label{fig4a}\includegraphics[width=0.45\textwidth]{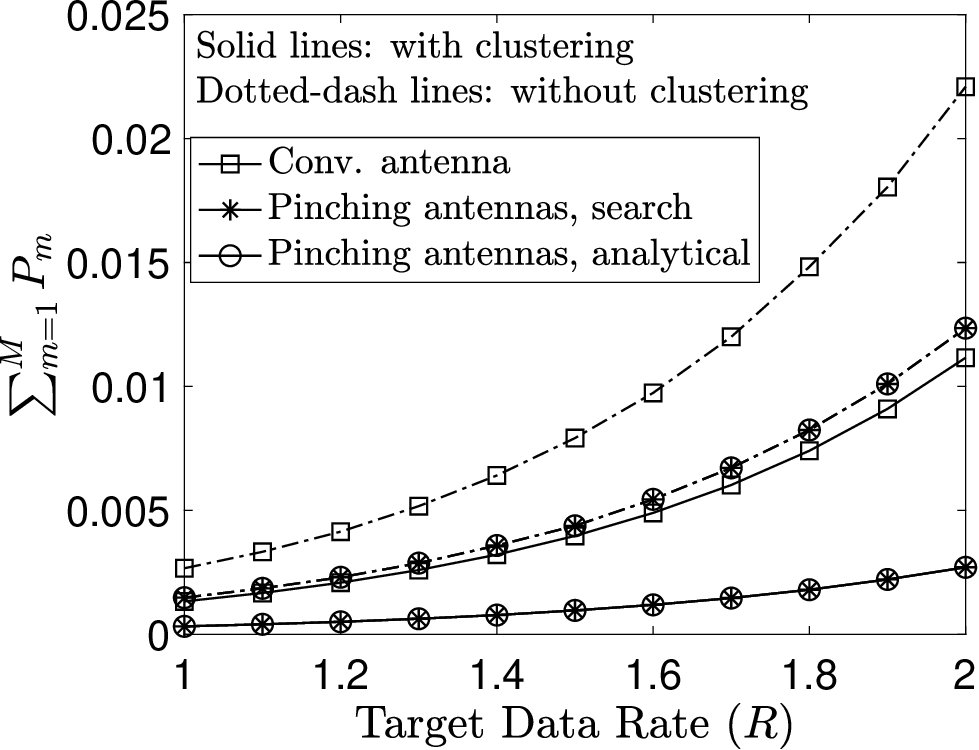}} \subfigure[$M=5$]{\label{fig4b}\includegraphics[width=0.45\textwidth]{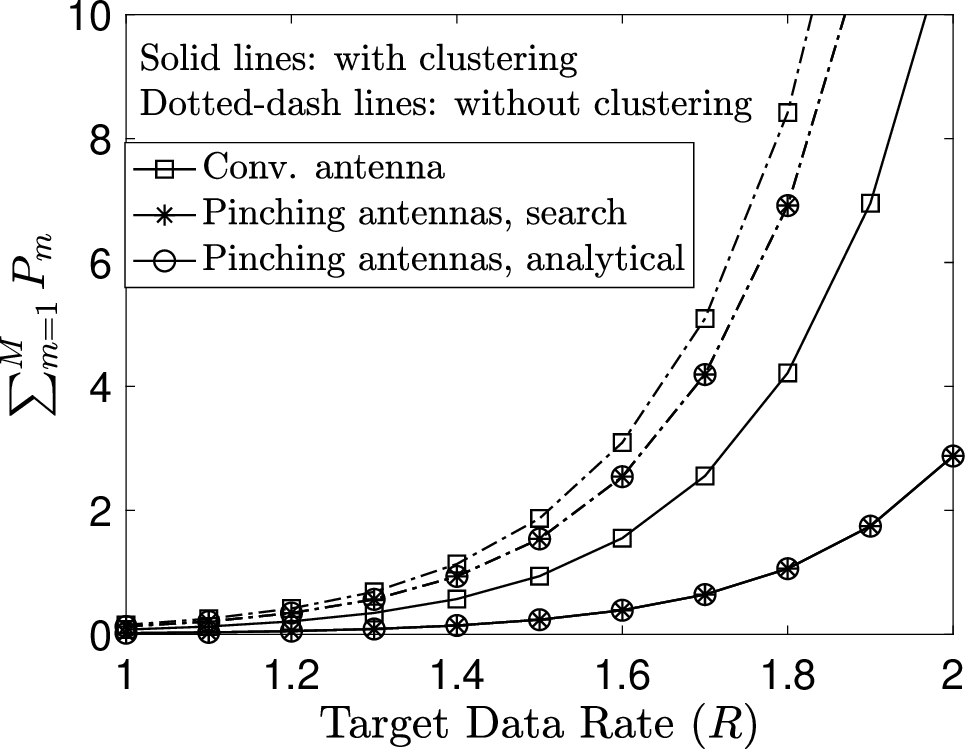}} 
\end{center}
\caption{The total transmit power, i.e., $  \sum^{M}_{m=1}  P_{m}$ which is the objective function of problem \eqref{pb:6}, required by the considered OMA transmission schemes, where   $D_{\rm L}=40$ m.    \vspace{-1em} }\label{fig4}\vspace{-1.2em}
\end{figure}

Fig. \ref{fig4} shows the total transmit power, i.e., $  \sum^{M}_{m=1}  P_{m}$ which is the objective function of problem \eqref{pb:6}, required by the considered OMA transmission schemes. As can be seen from the figure, the use of pinching antennas can significantly reduce the power consumption with or without clustering, for the case of $M=2$. However, for the case of $M=5$, without clustering, the performance achieved by pinching antennas becomes similar to that of conventional antennas, which is again due to the fact that the optimal location of the pinching antenna becomes close to the center of the service area, if $M$ is large. However, with clustering, the performance gain of pinching antennas over conventional antennas again becomes significant, an observation similar to that of Fig. \ref{fig3}. Fig. \ref{fig4} also verifies the optimality of the analytical solution shown in Corollary \ref{corollary1}.  Recall that for the cases considered in Figs. \ref{fig3} and \ref{fig4},  $x^*=\frac{1}{M}  \sum^{M}_{m=1}x_m $ is used as the optimal antenna location, and Fig. \ref{fig5} is provided to show the outage data rates, i.e., $(1-\mathbb{P}^{\rm o}_m)R$, achieved by the considered transmission scheme based on this optimal antenna location. As can be seen from Fig. \ref{fig5}, the use of pinching antennas leads to a significant performance improvement compared to conventional antennas, particularly for the case with clustering.  Fig. \ref{fig5} also demonstrates the accuracy of the analytical results for the outage probability shown in Lemma \ref{lemma2}. 

    \begin{figure}[t]\centering \vspace{-0em}
    \epsfig{file=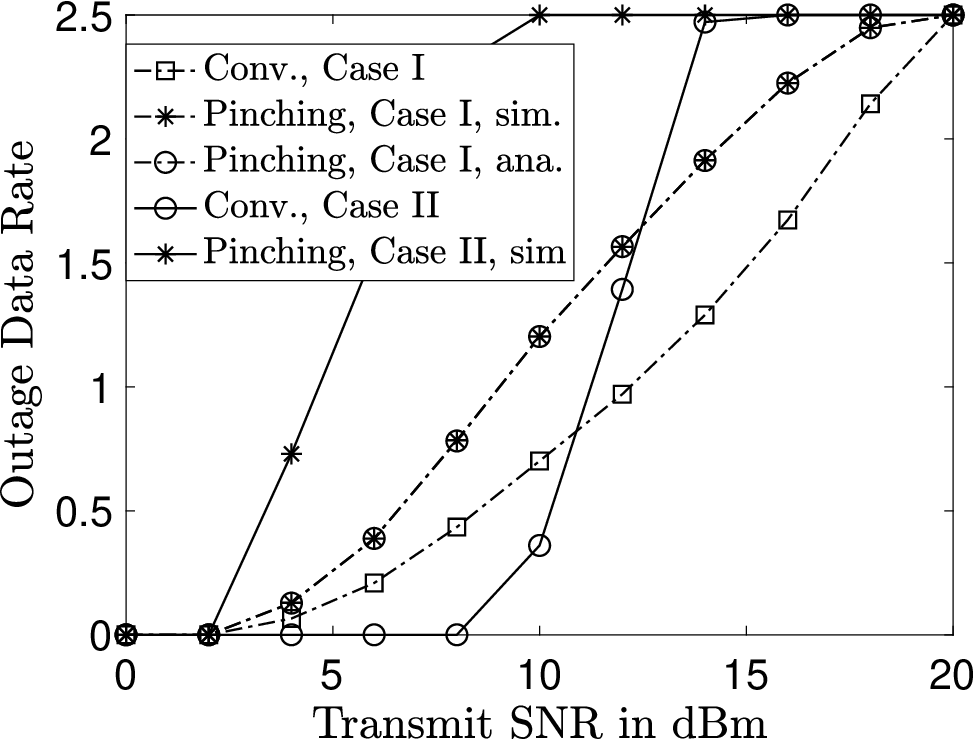, width=0.45\textwidth, clip=}\vspace{-0.5em}
\caption{{ Outage data rates, i.e., $(1-\mathbb{P}^{\rm o}_m)R$, achieved by the considered OMA transmission schemes,  for the case that $x^*=\frac{1}{M}  \sum^{M}_{m=1}x_m $, where $D_{\rm L}=40$ m, and $R=2.5$ bits per channel use (BPCU).   }
  \vspace{-1em}    }\label{fig5}   \vspace{-1em} 
\end{figure}

 \begin{figure}[t] \vspace{-0.2em}
\begin{center}
\subfigure[$R=1$ ]{\label{fig6a}\includegraphics[width=0.45\textwidth]{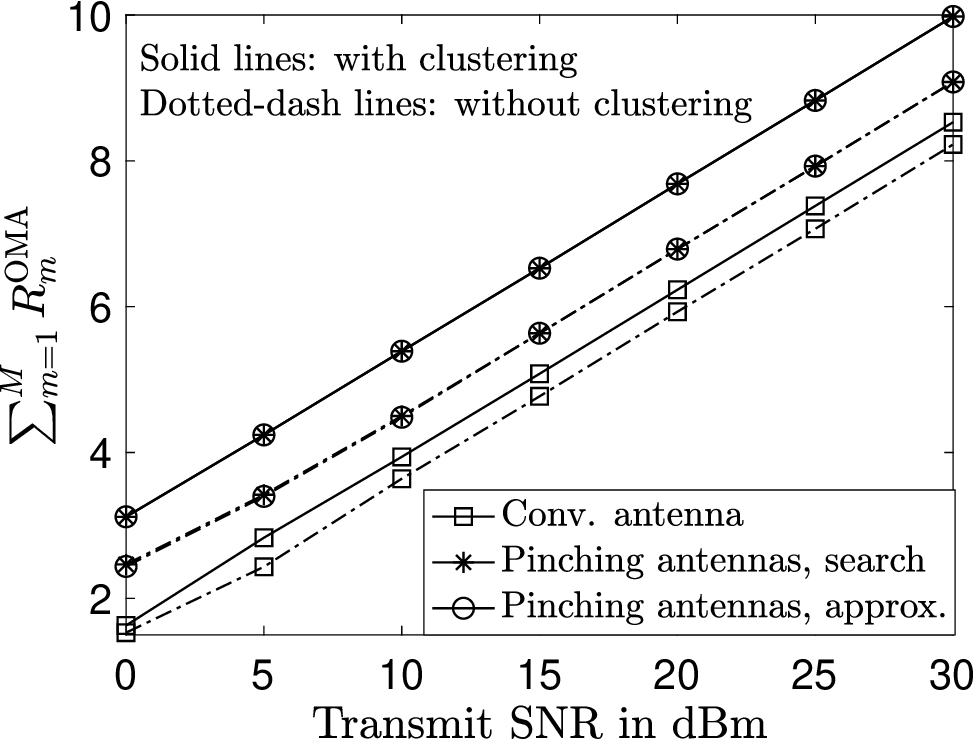}} \subfigure[$R=3$]{\label{fig6b}\includegraphics[width=0.45\textwidth]{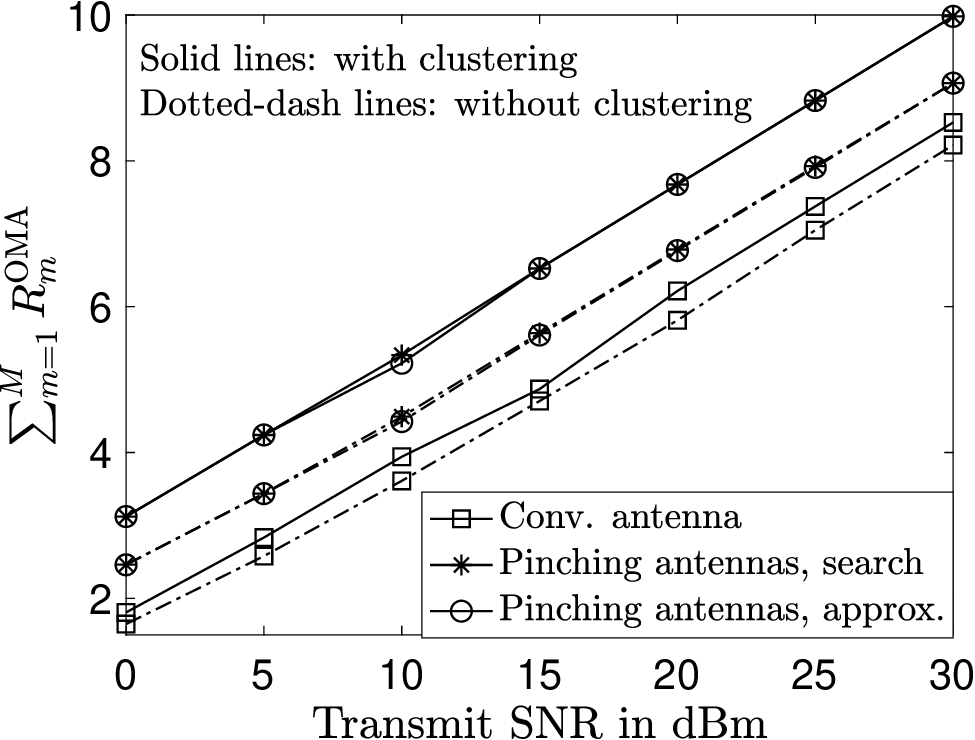}} 
\end{center}
\caption{The system throughput, measured by $ \sum^{M}_{m=1}R_m^{\rm OMA} $ which is the objective function of problem \eqref{pb:8}, achieved by the considered OMA transmission schemes, where  $D_{\rm L}=40$ m and $M=2$. \vspace{-1em} }\label{fig6}\vspace{-1.2em}
\end{figure}

Fig. \ref{fig6} illustrates the system throughput, measured by $ \sum^{M}_{m=1}R_m^{\rm OMA} $ which is the objective function of problem \eqref{pb:8}, achieved by the considered OMA transmission schemes. Recall that, due to its non-convex nature, problem \eqref{pb:8} is more challenging than problems \eqref{pb:1} and \eqref{pb:6}. As discussed in Section \ref{section 4}, a one-dimensional search-based algorithm can be developed by utilizing the solution shown in Lemma \ref{lemma3}. In addition, Section \ref{subsection high snr} also shows a high-SNR approximation for the optimal solution, by applying a root solver to \eqref{cubic}. In Fig. \ref{fig6}, the performance of the high-SNR approximation of the optimal solution is almost identical to that of the search-based scheme, which demonstrates the tightness of the carried-out approximation.  Consistent with the previous figures, Fig. \ref{fig6} demonstrates that the use of clustering is useful to improve the performance of pinching-antenna systems. 

 \begin{figure}[t] \vspace{-0.2em}
\begin{center}
\subfigure[Without clustering ]{\label{fig7a}\includegraphics[width=0.45\textwidth]{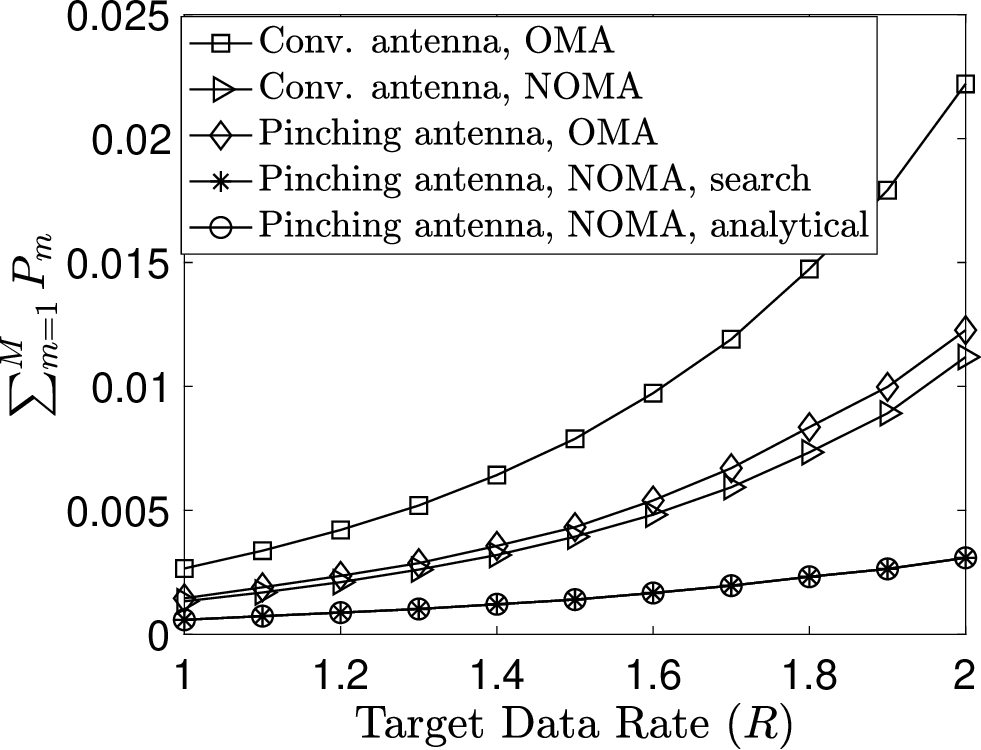}} \subfigure[With clustering]{\label{fig7b}\includegraphics[width=0.45\textwidth]{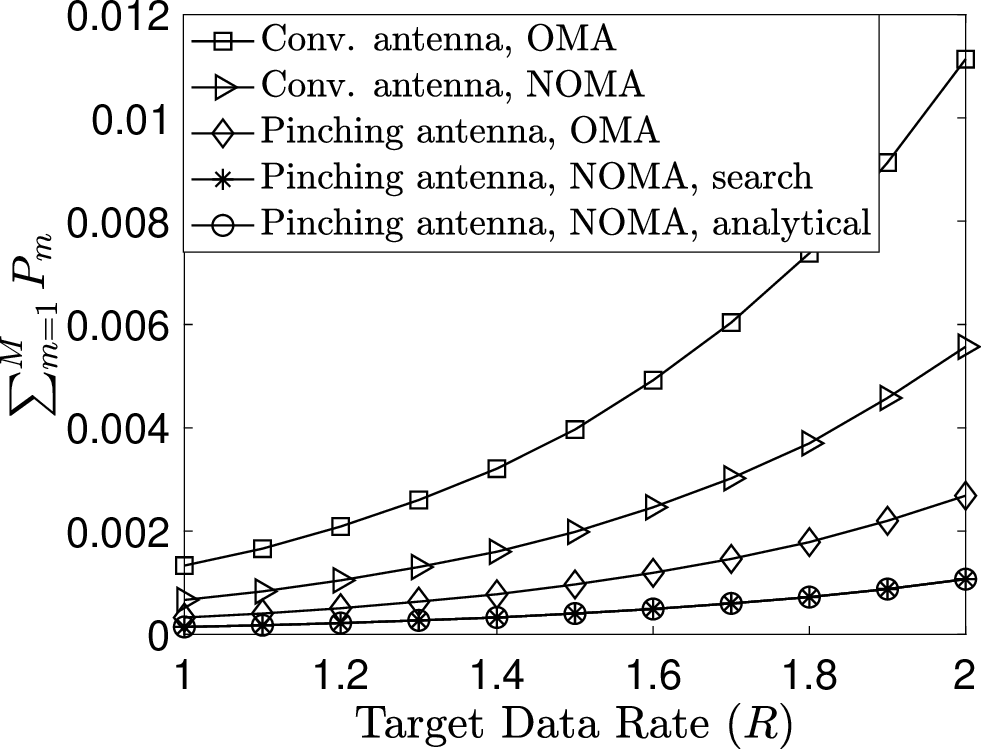}} 
\end{center}
\caption{The total transmit power, i.e., $  \sum^{M}_{m=1}  P_{m}$ which is the objective function of problem \eqref{pb:13}, required by the considered NOMA transmission scheme,  where $D_{\rm L}=40$ m, $M=2$ and $R=2$ BPCU. \vspace{-1em} }\label{fig7}\vspace{-1.2em}
\end{figure}

Fig. \ref{fig7} illustrates the total transmit power, i.e., $  \sum^{M}_{m=1}  P_{m}$ which is the objective function of problem \eqref{pb:13}, required by the considered NOMA transmission scheme. As discussed in Section \ref{section 42}, a closed-form expression for the optimal solution of problem \eqref{pb:13} can be obtained in a low-complexity manner. The optimality of the obtained analytical solution shown in Lemma \ref{lemma5} is verified by the fact that in Fig. \ref{fig7}, the analytical solution achieves the same performance as the search-based scheme. Figs. \ref{fig7a} and \ref{fig7b} show that NOMA-assisted pinching-antenna scheme achieves the best performance among the considered transmission schemes, where the OMA-based conventional-antenna one is the worst. The comparison between the NOMA-assisted conventional-antenna scheme and the OMA-based pinching-antenna system is more complex. Without clustering, they achieve similar performance, with the NOMA-based conventional-antenna scheme being slightly better. With clustering, the OMA-based pinching-antenna scheme significantly outperforms the  NOMA-based conventional-antenna scheme. This significant performance improvement of the OMA-based pinching-antenna system is consistent with the observations previously made in Figs. \ref{fig3} - \ref{fig6}. 

\section{Conclusions}
In this paper, analytical optimization for pinching antenna placement has been carried out in order to reveal the impact of antenna placement on the system performance. In particular, for user-fairness-oriented OMA-based transmission, a closed-form optimal solution was obtained to reveal that the pinching antenna needs to be activated at the place that would be beneficial to all served users; however, the users' distances to the waveguide have no impact on the location selection.  For greedy-allocation-based OMA transmission, an asymptotic study based on the high SNR approximation was carried out to show that the optimal antenna location is in close proximity to the user who is nearest to the waveguide.  For NOMA-based transmission, even with a user-fairness-oriented objective, the analytical results were obtained to show that the optimal antenna location is not the position that can benefit all users, but rather is near the user positioned close to the waveguide.

\appendices
\section{Proof for Lemma \ref{lemma1}}\label{proof1}

By applying the epigraph of the objective function shown in \eqref{1tst:1}, problem \eqref{pb:1} can be recast as follows: 
    \begin{problem}\label{pb:2} 
  \begin{alignat}{2}
\underset{P_{m} \geq0, x,t  }{\rm{max}}  &\quad  t \label{2tst:1}
\\ s.t. &\quad     R_m^{\rm OMA}\geq t, \quad 1\leq m \leq M
\\ &\quad 
\eqref{1tst:2},  \eqref{1tst:3}  . 
  \end{alignat}
\end{problem}  
It is challenging to directly solve problem \eqref{pb:2} which is not a joint convex function of $P_{m}$, $x$, and $t$. In the following, closed-form expressions for the optimal solutions of $P_{m}$ and $t$ are first obtained by fixing $x$. Then, the optimal solution of the antenna location is obtained by utilizing the closed-form solution for $P_{m}$ and $t$. 

In particular, by treating $x$ as a constant, problem \eqref{pb:2} can be recast as follows:
  \begin{problem}\label{pb:3} 
  \begin{alignat}{2}
\underset{P_{m}\geq0 , t }{\rm{max}}  &\quad  t \label{3tst:1}
\\ s.t. &\quad     
    \tau_{mx} \frac{\sigma^2}{\eta}e^{Mt}-\tau_{mx}\frac{\sigma^2}{\eta}-P_m\leq 0 , \quad 1\leq m \leq M \label{3tst:1}
\\ &\quad 
 \sum^{M}_{m=1}  P_{m}\leq P  \label{3tst:2}  ,
   \end{alignat}
\end{problem}  
where $\tau_{mx}=(x-x_m)^2+y_m^2+d^2$ is a constant for a fixed $x$. It is straightforward to verify that the constraints shown in \eqref{3tst:1} are convex, and hence problem \eqref{pb:3} is a convex optimization problem. 

The Lagrangian of problem \eqref{pb:3}  is given by
\begin{align}
L =& -t +\sum^{M}_{m=1}\lambda_m\left( \tau_{mx}\frac{\sigma^2}{\eta}e^{Mt}-\tau_{mx}\frac{\sigma^2}{\eta}-P_m
\right) \\\nonumber &+\lambda_0\left( \sum^{M}_{m=1}  P_{m}- P \right),
\end{align}
where $\lambda_i$, $0\leq i \leq M$, denote the Lagrange multipliers. We note that the constraints $P_m\geq 0$ are omitted here in order to reduce the number of Lagrange multipliers, where whether the obtained power control solutions are non-negative will be verified later. 
The KKT conditions of problem \eqref{pb:3} are given by
\begin{align}
\left\{\begin{array}{l}  
-1 +M \frac{\sigma^2}{\eta}e^{Mt} \sum^{M}_{m=1}\lambda_m\tau_{mx} =0
 \\ 
- \lambda_m  +\lambda_0=0, 1\leq m \leq M\\ 
\lambda_m\left( \tau_{mx}\frac{\sigma^2}{\eta}e^{Mt}-\tau_{mx}\frac{\sigma^2}{\eta}-P_m\right) =0 , 1\leq m \leq M\\
\lambda_0\left(
 \sum^{M}_{m=1}  P_{m}- P
\right)=0
 \end{array}\right..
 \end{align}
 The solutions of the above KKT conditions are given by
\begin{align}\label{kktsol1}
\left\{\begin{array}{l}  
\lambda_0 = \lambda_m = \frac{\eta}{M \sigma^2e^{Mt} \sum^{M}_{m=1}\tau_{mx}}
 \\ 
  t=\frac{1}{M}\log \frac{\eta P+\sigma^2 \sum^{M}_{m=1} \tau_{mx} }{\sigma^2 \sum^{M}_{m=1}   
\tau_{mx} }\\P_m =  \frac{ \tau_{mx}   }{  \sum^{M}_{i=1}   \tau_{ix} } P
 \end{array}\right..
 \end{align}
 Because both $P_m$, $1\leq m \leq M$, and $\lambda_i$, $0\leq i\leq M$, shown in \eqref{kktsol1} are positive, the obtained solutions are the optimal solutions of problem \eqref{pb:3}.  Due to space limitations, the primal feasibility conditions are omitted here, and it is straightforward to verify that the obtained solution satisfies all the primal feasibility conditions.

By substituting the closed-form expressions of $P_m$ and $t$ in \eqref{kktsol1} into problem \eqref{pb:2}, an optimization problem with respect to $x$ only is obtained as follows:
    \begin{problem}\label{pb:4} 
  \begin{alignat}{2}
\underset{  -\frac{D_{\rm L}}{2}\leq x \leq \frac{D_{\rm L}}{2}  }{\rm{max}}  &\quad  \frac{1}{M}\log \frac{\eta P+\sigma^2 \sum^{M}_{m=1} \tau_{mx} }{\sigma^2 \sum^{M}_{m=1}   
\tau_{mx} } \label{4tst:1} ,
  \end{alignat}
\end{problem}  
which is equivalent to the following minimization problem:  
    \begin{problem}\label{pb:5} 
  \begin{alignat}{2}
\underset{ -\frac{D_{\rm L}}{2}\leq x \leq \frac{D_{\rm L}}{2}  }{\rm{min}}  &\quad   \sum^{M}_{m=1}   \left(
(x-x_m)^2+y_m^2+d^2\right)\label{5tst:1} .
  \end{alignat}
\end{problem}  
Problem \eqref{pb:5} is a convex optimization problem, and its Lagrangian is given by $
L =  \sum^{M}_{m=1}   \left(
(x-x_m)^2+y_m^2+d^2\right) + \lambda_1\left(x - \frac{D_{\rm L}}{2}  \right)+ \lambda_2\left(- \frac{D_{\rm L}}{2} -x \right)$, 
where $\lambda_i$, $1\leq i \leq 2$, denote the Lagrange multipliers. The KKT conditions of problem \eqref{pb:5} are given by
\begin{align}
\left\{\begin{array}{l}  
2\sum^{M}_{m=1}   \left(
x-x_m\right) + \lambda_1 - \lambda_2 =0
 \\ 
\lambda_1\left(x - \frac{D_{\rm L}}{2}  \right)=0\\
\lambda_2\left(- \frac{D_{\rm L}}{2} -x \right)=0 \end{array}\right..
 \end{align}
 We note that the solution obtained later satisfies the primal feasibility conditions which are omitted in the above equation set. 
If  $\lambda_1\neq 0$, $x=\frac{D_{\rm L}}{2} $, $\lambda_2=0$, and   $\lambda_1=-2\sum^{M}_{m=1}   \left(
\frac{D_{\rm L}}{2} -x_m\right) $. Because $-\frac{D_{\rm L}}{2}\leq x_m\leq \frac{D_{\rm L}}{2}$, $\lambda_1<0$, which means that $\lambda_1\neq 0$ cannot be true. Following similar steps, it can be established that $\lambda_2=0$, which means that the optimal solution needs to satisfy the following constraint:
\begin{align}
 &2\sum^{M}_{m=1}  
(x^*-x_m)=0,
\end{align}
which completes the proof for the lemma.

\section{Proof for Lemma \ref{lemma2}} \label{proof2}
Without loss of generality, ${\rm U}_1$'s outage probability is focused on in the proof. For the two-user special case, the user's power outage probability can be expressed as follows: 
\begin{align}
\mathbb{P}^{\rm o}_1 =&    \mathbb{P}\left(
 \frac{1}{4} \left( x_2- x_1\right)^2 +y_1^2 \geq \frac{\bar{P}}{\epsilon}-d^2
\right).
\end{align}
Recall that $x_m$ is unformlly distributed between $-\frac{D_{\rm L}}{2}$ and $\frac{D_{\rm L}}{2}$, instead of a standard uniform distribution between $0$ and $1$. To facilitate the performance analysis, $\mathbb{P}^{\rm o}_1$ is rewritten as follows:
\begin{align}
\mathbb{P}^{\rm o}_1  
=&    \mathbb{P}\left(
 \frac{D_{\rm L}^2}{4}z_4 +y_1^2 \geq \frac{\bar{P}}{\epsilon}-d^2
\right),
\end{align}
where $z_4=z_3^2$ $z_3=z_1-z_2$, $z_1=\frac{1}{D_{\rm L}}x_1+\frac{1}{2}$, and $z_2=\frac{1}{D_{\rm L}}x_2 +\frac{1}{2}$.

Because $x_1$ and $x_2$ are uniformly distributed between $-\frac{D_{\rm L}}{2}$ and $\frac{D_{\rm L}}{2}$, both $z_1$ and $z_2$ are uniformly distributed between $0$ and $1$. Therefore, the probability density function (pdf) of $z_3$ follows the triangular distribution, i.e., 
\begin{align}
f_{z_3}(z) = 1 - |z|, 
\end{align}
for $-1\leq z \leq 1$. By using $f_{z_3}(z) $ and also the fact that $z_4=z_3^2$, the pdf of $z_4$ is obtained as follows:
\begin{align}
f_{z_4}(z) =& f_{z_3}(\sqrt{z})  \left|\frac{d}{dz} \sqrt{z} \right| + f_{z_3}(-\sqrt{z})   \left|\frac{d}{dz} (-\sqrt{z}) \right|
\\\nonumber =&
  2   (1 - \sqrt{z})   \frac{1}{2\sqrt{z}} =\frac{1}{\sqrt{z}}-1.
\end{align} 
Therefore, the cumulative distribution function (CDF) of $z_4$ is given by
\begin{align}
  F_{z_4}(z) = 2\sqrt{z}-z,
\end{align}
for $0< z \leq 1$. 

Therefore, the power outage probability can be expressed as follows:
\begin{align}
\mathbb{P}^{\rm o}_1 =&      \mathbb{P}\left(
z_4  \geq  \frac{4}{D_{\rm L}^2}\left(\frac{\bar{P}}{\epsilon}-d^2\right)-\frac{4}{D_{\rm L}^2}y_1^2
\right). 
\end{align}
Because $z_4$ is confined between $0$ an $1$,  $\mathbb{P}^{\rm o}_1$ needs to be expressed as follows:
\begin{align}
\mathbb{P}^{\rm o}_1 =&     \mathbb{P}\left(
 \theta_1-\frac{4}{D_{\rm L}^2}y_1^2\leq 0
\right)\\\nonumber &+  \mathbb{P}\left(
z_4  \geq \theta_1-\frac{4}{D_{\rm L}^2}y_1^2 , 0<\theta_1-\frac{4}{D_{\rm L}^2}y_1^2\leq 1
\right).
\end{align}

Recall that there is a hiddent constraint to $y_1$, i.e., $-\frac{D_{\rm W}}{2}\leq y_1\leq \frac{D_{\rm W}}{2}$, which means that the power outage probability can be further expressed as follows:
\begin{align}
\mathbb{P}^{\rm o}_1 =&    
   2 \mathbb{P}\left(
 y_1\geq \min\left\{\frac{D_{\rm W}}{2}, \sqrt{\frac{\theta_1 D_{\rm L}^2}{4}}\right\}
\right)+\\\nonumber & 2\mathbb{P}\left(
1\geq z_4  \geq \theta_1-\frac{4}{D_{\rm L}^2}y_1^2 , \min\left\{\frac{D_{\rm W}}{2},\sqrt{\frac{D_{\rm L}^2}{4}\theta_1}\right\}>y_1\right. \\\nonumber &\left.\quad\quad\quad\geq\sqrt{ \max\left\{0,\frac{D_{\rm L}^2}{4}\left(\theta_1-1\right)\right\}}
\right).
\end{align}
By applying the CDF of $z_4$,  $\mathbb{P}^{\rm o}_1$ can be rewritten as follows:
\begin{align}
\mathbb{P}^{\rm o}_1 =&   
   2  \frac{\frac{D_{\rm W}}{2}- \min\left\{\frac{D_{\rm W}}{2}, \sqrt{\frac{\theta_1 D_{\rm L}^2}{4}}\right\}}{D_{\rm W}}\\\nonumber &+  2\frac{1}{D_{\rm W}} \int^{\theta_2}_{\theta_3}\left(1-2\sqrt{\theta_1-\frac{4}{D_{\rm L}^2}y_1^2}+\theta_1-\frac{4}{D_{\rm L}^2}y_1^2  \right)dy_1  \\\nonumber   =&   
   2  \frac{\frac{D_{\rm W}}{2}- \min\left\{\frac{D_{\rm W}}{2}, \sqrt{\frac{\theta_1 D_{\rm L}^2}{4}}\right\}}{D_{\rm W}}\\\nonumber &+  2\frac{1}{D_{\rm W}}  \left[y_1+\theta_1y_1-\frac{4}{3D_{\rm L}^2}y_1^3  
\right.\\\nonumber 
&\left.   -\frac{4}{D_{\rm L}}
   \left(
   \frac{y_1}{2} \sqrt{\theta_4 - y_1^2} + \frac{\theta_4}{2} \sin^{-1}\left(\frac{y_1}{\sqrt{\theta_4}}\right)
   \right) \right]^{\theta_2}_{\theta_3} .
\end{align}
With some straightforward algebraic manipulations, the expression shown in \eqref{eqlemma2} can be obtained,
which completes the proof of the lemma. 

\section{Proof for Lemma \ref{lemma3}}\label{proof3}

It is straightforward to verify that problem \eqref{pb:9} is a convex optimization problem of $P_m$, and therefore its optimal solution can be obtained by solving its KKT conditions. In particular, the Lagrangian of problem \eqref{pb:9} is given by
\begin{align}
L =& -\sum^{M}_{m=1}\log\left(
  1+\frac{\eta P_m}{\sigma^2 \tau_{mx}}
  \right)  \\\nonumber
  &+ \sum^{M}_{m=1} \lambda_m\left( \epsilon (x-x_m)^2+\tau_m-P_m\right)\\\nonumber
  &+\lambda_0\left(
  \sum^{M}_{m=1}  P_{m}- P
  \right),
\end{align}
where $\lambda_m$, $0\leq m \leq M$, denote the Lagrange multipliers. 
The corresponding KKT conditions are given by
\begin{align}
\left\{\begin{array}{l}   -
\frac{
   \frac{ \eta}{\sigma^2 \tau_{mx}}}{
  1+\frac{\eta P_m}{\sigma^2 \tau_{mx}}}- \lambda_m +\lambda_0=0, 1\leq m \leq M\\ 
 \lambda_m\left( \epsilon (x-x_m)^2+\tau_m-P_m\right)=0, 1\leq m \leq M\\
\lambda_0\left(
 \sum^{M}_{m=1}  P_{m}- P
\right)=0\\ \eqref{9tst:1},\eqref{9tst:2}
 \end{array}\right..
 \end{align}
 It is straightforward to show that  $\lambda_0\neq 0$, otherwise $ -
\frac{
   \frac{ \eta}{\sigma^2 \tau_{mx}}}{
  1+\frac{\eta P_m}{\sigma^2 \tau_{mx}}}- \lambda_m=0$, which is not true. Therefore, the KKT conditions can be rewritten as follows:  
  \begin{align}\label{kktx1}
\left\{\begin{array}{l}  
P_m = \frac{1}{\lambda_0 -\lambda_m}- \frac{\sigma^2 \tau_{mx}}{\eta }, 1\leq m \leq M\\ 
 \lambda_m\left( \epsilon (x-x_m)^2+\tau_m-P_m\right)=0, 1\leq m \leq M\\
 \sum^{M}_{m=1}  P_{m}- P
 =0\\ \eqref{9tst:1},\eqref{9tst:2}
 \end{array}\right..
 \end{align}
 The KKT conditions can be solved by further analyzing the possible choices of $\lambda_1$ and $\lambda_2$.

 \subsubsection{The case with $\lambda_1\neq 0$ and $\lambda_2\neq 0$} if both the multipliers are non-zero, the KKT conditions in \eqref{kktx1} are reduced to the following: 
  \begin{align}
\left\{\begin{array}{l}  
P_m = \frac{1}{\lambda_0 }- \frac{\sigma^2 \tau_{mx}}{\eta }, 1\leq m \leq M\\  
 \sum^{M}_{m=1}  P_{m}- P
 =0
 \end{array}\right.,
 \end{align}
 which is not feasible 
 given the random locations of the users. 

 \subsubsection{The case with $\lambda_1=0$ and $\lambda_2\neq 0$} with these choices of the multipliers,  the KKT conditions in \eqref{kktx1} are reduced as follows: 
  \begin{align}
\left\{\begin{array}{l}  
P_1= \frac{1}{\lambda_0 -\lambda_1}- \frac{\sigma^2 \tau_{1x}}{\eta } \\ 
P_2 = \frac{1}{\lambda_0 -\lambda_2}- \frac{\sigma^2 \tau_{2x}}{\eta } \\ 
 \epsilon (x-x_2)^2+\tau_2-P_2=0 \\
 \sum^{M}_{m=1}  P_{m}- P
 =0\\ \eqref{9tst:1},\eqref{9tst:2}
 \end{array}\right..
 \end{align}
 With some straightforward algebraic manipulations, the closed-form expressions can be obtained for $P_m$ and $\lambda_m$ as follows: 
   \begin{align}\label{kktx133}
\left\{\begin{array}{l}  
 P_1=P  -\epsilon \tau_{2x}  \\ 
 P_2=\epsilon \tau_{2x}  \\
 \lambda_0  =\frac{1}{P+\frac{\sigma^2 \tau_{1x}}{\eta } -\epsilon \tau_{2x}}\\
\lambda_2=\frac{1}{P  -\epsilon \tau_{2x} +\frac{\sigma^2 \tau_{1x}}{\eta }  }  - \frac{1}{\epsilon \tau_{2x}+\frac{\sigma^2 \tau_{2x}}{\eta } } 
 \end{array}\right..
 \end{align} 
 Due to the assumption that $P\geq \epsilon(\tau_{1x}+\tau_{2x})$, $\lambda_0$ is always non-negative. Among the two primary feasibility conditions, \eqref{9tst:2} is always satisfied. By substituting the solutions in \eqref{kktx133} into the primary feasibility condition, \eqref{9tst:1}, the following inequality is obtained:
  \begin{align} 
 \epsilon \tau_{1x}- P  +\epsilon \tau_{2x}\leq 0,
 \end{align}
 which always holds due to the assumption that $P\geq \epsilon (\tau_{1x}+\tau_{2x})$. 
    Therefore, the solution corresponding to this case is optimal if $\lambda_2\geq 0$.

  \subsubsection{The case with $\lambda_1\neq 0$ and $\lambda_2= 0$} following the analysis for the case with $\lambda_1=0$ and $\lambda_2\neq 0$, it is straightforward to establish that the solutions for this case are given by
   \begin{align}\label{kktx2}
\left\{\begin{array}{l}  
 P_1=\epsilon \tau_{1x}  \\
 P_2=P  -\epsilon \tau_{1x}  \\ 
 \lambda_0  =\frac{1}{P+\frac{\sigma^2 \tau_{2x}}{\eta } -\epsilon \tau_{1x}}\\
\lambda_1=\frac{1}{P  -\epsilon \tau_{1x} +\frac{\sigma^2 \tau_{2x}}{\eta }  }  - \frac{1}{\epsilon \tau_{1x}+\frac{\sigma^2 \tau_{1x}}{\eta } } 
 \end{array}\right.,
 \end{align} 
 where its feasibility condition can be obtained as shown in the lemma. 
  
 \subsubsection{The case with $\lambda_1=0$ and $\lambda_2= 0$} with such choices of the multipliers, the KKT conditions in \eqref{kktx1} can be used to obtain the following closed-form solutions:
  \begin{align}\label{kktx3}
\left\{\begin{array}{l}  
P_1=\frac{P}{2}+    \frac{\sigma^2 \tau_{2x}}{2\eta }- \frac{\sigma^2 \tau_{1x}}{2\eta } \\ 
P_2 = \frac{P}{2}+  \frac{\sigma^2 \tau_{1x}}{2\eta }   - \frac{\sigma^2 \tau_{2x}}{2\eta } \\
 \lambda_0 =\frac{2}{P+  \frac{\sigma^2 \tau_{1x}}{\eta }  + \frac{\sigma^2 \tau_{2x}}{\eta }}
 \end{array}\right..
 \end{align} 
  By combining \eqref{kktx1}, \eqref{kktx2}, and \eqref{kktx3}, the proof of the lemma is complete. 

\section{Proof for Lemma \ref{lemma4}}\label{proof4}

Only the case with $|y_1|\leq |y_2|$ is focused on in the proof. Without loss of generality, assume $x_1\leq x_2$.  The lemma is proved by the following two steps. The first step is to show that there is at least one local minimum between $x_1$ and the middle point $\frac{x_1+x_2}{2}$. By using the first-order derivative expression in \eqref{cubic}, the following equalities can be established:
     \begin{align}\nonumber
 f'(x_1)=& 2(x_1-x_2)\left((x-x_1)^2+y_1^2+d^2\right) \leq 0,\\\nonumber
  f'(x_2)=
 &2(x_2-x_1)\left((x-x_2)^2+y_2^2+d^2\right)\geq 0,
 \end{align}
 which are due to the assumption that $x_1\leq x_2$.
 In addition, $ f\left(\frac{x_1+x_2}{2}\right) $ can be expressed as follows:
      \begin{align} 
  f\left(\frac{x_1+x_2}{2}\right) =& (x_1-x_2)\left( y_1^2-y^2_2 \right) \geq 0,
 \end{align}
 which follows from the assumptions that  $|y_1|<|y_2|$ and $x_1\leq x_2$. Therefore, there is at least one root between $x_1$ and $\frac{x_1+x_2}{2}$.

The second step is to show that if there exists a local minimum between $\frac{x_1+x_2}{2}$ and $x_2$, denoted by $\tilde{x}$, there exists a point between $x_1$ and $\frac{x_1+x_2}{2}$ which can reduce the objective function. Recall that the objective function with $\tilde{x}$ is given by
    \begin{align}
 f(\tilde{x}) 
 =&\left((\tilde{x}-x_1)^2+y_1^2+d^2\right)\left((\tilde{x}-x_2)^2+y_2^2+d^2\right) .
 \end{align}
 Define $\bar{x}=\tilde{x}-\frac{x_1+x_2}{2}$, and $ f(\tilde{x}) $ can be expressed as follows:
     \begin{align}
 f(\tilde{x}) = \bar{f}(\bar{x})  \triangleq &\left((\bar{x} +\phi_1)^2+\phi_2\right)\left((\bar{x}- \phi_1)^2+c\phi_2\right),
 \end{align}
 where $\phi_1=\frac{x_2-x_1}{2}$, $\phi_2=y_1^2+d^2$,  $c=\frac{y_2^2+d^2}{y_1^2+d^2}$ and $c\geq 1$ due to the assumption that $|y_1|<|y_2|$. 

Consider a new point $\breve{x}=-\bar{x}+\frac{x_1+x_2}{2}$, which is between $x_1$ and the middle point $\frac{x_1+x_2}{2}$, i.e., $\breve{x}-x_1\leq \frac{x_1+x_2}{2}$ since  $x_2-\tilde{x}\leq \frac{x_1+x_2}{2}$. The second step of the proof is complete if the following inequality holds:
\begin{align}
  f(\breve{x}) 
  -f(\tilde{x})\leq 0 .
\end{align}
The above difference can be expressed as follows:
     \begin{align}\nonumber
  f(\breve{x}) 
  -f(\tilde{x}) =&\left((-\bar{x} +\phi_1)^2+\phi_2\right)\left((-\bar{x}- \phi_1)^2+c\phi_2\right)\\\nonumber &-\left((\bar{x} +\phi_1)^2+\phi_2\right)\left((\bar{x}- \phi)^2+c\phi_2\right) \\\nonumber 
=& \phi_2(1-c) \left((\bar{x}+ \phi)^2-(\bar{x} -\phi_1)^2\right)\leq 0,
 \end{align}
 where the last inequality follows from the assumptions that $x_1\leq x_2$ and $|y_1|\leq |y_2|$.  Combining the two aforementioned steps, the proof of the lemma is complete. 
 
 \section{Proof for Lemma \ref{lemma5}}\label{proof5}
 
The Lagrangian of problem \eqref{pb:15} is given by
\begin{align}
L = & \sum^{M}_{m=1}  P_{m}+\lambda_2\left(-P_2+\frac{\tilde{\epsilon} \eta }{\sigma^2}P_1  + \tilde{\epsilon}  (x-x_2)^2 + \tilde{\tau}_2 \right)\\\nonumber &+\lambda_1\left(
  -P_1+ \tilde{\epsilon} (x-x_1)^2+\tilde{\tau}_1
\right).
\end{align}
The corresponding KKT conditions are given by
\begin{align}\label{kktc2}
\left\{\begin{array}{l}  
1-\lambda_2=0\\
1+\lambda_2\frac{\tilde{\epsilon} \eta }{\sigma^2}-\lambda_1=0\\
 2 \lambda_2\tilde{\epsilon}(x-x_2)+2 \lambda_1\tilde{\epsilon}(x-x_1)=0\\ 
\lambda_2\left(-P_2+\frac{\tilde{\epsilon} \eta }{\sigma^2}P_1  + \tilde{\epsilon}  (x-x_2)^2 + \tilde{\tau}_2 \right) =0 \\
\lambda_1\left(
  -P_1+ \tilde{\epsilon} (x-x_1)^2+\tilde{\tau}_1
\right)=0\\\eqref{15tst:2}  \& \eqref{15tst:3}  
 \end{array}\right..
 \end{align}
 The first KKT condition, $1-\lambda_2=0$, leads to the conclusion that $\lambda_2=1$, which yields the following choice of $ \lambda_1$:
 \begin{align}
 \lambda_1= 1+ \frac{\tilde{\epsilon} \eta }{\sigma^2}. 
 \end{align}
 By using the choices of the multipliers and also the KKT conditions, the optimal solutions for the transmit powers are given by  
 \begin{align}\label{p*}
\left\{\begin{array}{l}  
P_1^* = \tilde{\epsilon} (x^*-x_1)^2+\tilde{\tau}_1\\ 
P_2^* = \frac{\tilde{\epsilon} \eta }{\sigma^2}P_1  + \tilde{\epsilon}  (x^*-x_2)^2 + \tilde{\tau}_2.
 \end{array}\right..
 \end{align}
 
By using the third condition in \eqref{kktc2} and also the choices of $\lambda_1$ and $\lambda_2$, the optimal antenna location needs to satisfy $ (x-x_2)+\lambda_1 (x-x_1)=0$, which yields a closed-form expression of the optimal antenna location as follows:
\begin{align} 
 x^* =  \frac{x_2+ x_1(1+ \frac{\tilde{\epsilon} \eta }{\sigma^2})}{ \frac{\tilde{\epsilon} \eta }{\sigma^2}+2} .
   \end{align}
With some straightforward algebraic manipulations, the optimal solution shown in the lemma can be obtained, and the proof of the lemma is complete. 
 
 \section{Proof for Lemma \ref{lemma6}} \label{proof6}
 
In order to show that the obtained solution in \eqref{15op} is also optimal for the original problem in \eqref{pb:13}, it is sufficient to show whether the obtained solution satisfies the two assumptions.

Recall that the first assumption is that the optimal solution makes ${\rm U}_1$ the strong user. Therefore, showing that the first assumption is satisfied is equivalent to proving the following inequality:
\begin{align}
\Delta_d\triangleq (x^*-x_1)^2+y^2_1- (x^*-x_2)^2+y^2_2\leq 0.
\end{align}
 By using the closed-form expression of $x^*$ in \eqref{15op}, $\Delta_d$ can be expressed as follows:
 \begin{align}
\Delta_d\triangleq& \left( \frac{x_2 }{   e^{R}+1} +\frac{  e^{R} x_1  }{   e^{R}+1}-x_1\right)^2+y^2_1\\\nonumber &- \left( \frac{x_2 }{   e^{R}+1} +\frac{  e^{R} x_1  }{   e^{R}+1}-x_2\right)^2-y^2_2 \\\nonumber
=& \frac{ ( x_2- x_1)^2  }{   \left( e^{R}+1\right)^2}\left(1-e^{2R}\right)+y^2_1-y^2_2  \leq 0,
\end{align}
since the users are orderd to ensure $y_1^2\leq y_2^2$ and it is assumed that $R\geq \frac{1}{2}$.  

Recall that the second assumption is that the constraints, $P_m\geq 0$ and $-\frac{D_{\rm L}}{2}\leq x \leq \frac{D_{\rm L}}{2} $, can be satisifed by the obtained solution.  $P_m^*\geq 0$ holds obviously, and $-\frac{D_{\rm L}}{2}\leq x^* \leq \frac{D_{\rm L}}{2} $ can be proved by showing that $x^*$ is between $x_1$ and $x_2$. First, focusing on the case that $x_1\leq x_2$, $x_1\leq x^*\leq x_2$ holds since 
   \begin{align}
 x^* -x_2=   & \frac{- e^{R} x_2 }{   e^{R}+1} +\frac{  e^{R} x_1  }{   e^{R}+1}  \leq 0,
 \\\nonumber
  x^* -x_1=   &  \frac{x_2 }{   e^{R}+1} -\frac{  x_1  }{   e^{R}+1} \geq 0.
   \end{align}
For the case that $x_2\leq x_1$, the fact that $x_2\leq x^*\leq x_1$ can also be proved by following the steps similar to the case of  $x_1\leq x_2$. The proof for the lemma is complete. 
 
  \vspace{-0.5em}
\bibliographystyle{IEEEtran}
\bibliography{IEEEfull,trasfer}
  \end{document}